\newtheorem{theorem}{Theorem}[section]
\newtheorem{lemma}[theorem]{Lemma}
\newtheorem{proposition}[theorem]{Proposition}
\newtheorem{fact}[theorem]{Fact}
\newtheorem{problem}[theorem]{Problem}
\newtheorem{remark}[theorem]{Remark}
\theoremstyle{definition}
\newtheorem{definition}[theorem]{Definition}
\numberwithin{equation}{section}
\mathchardef\hyphen="2D
\newcommand{\ubar}[1]{\underaccent{\bar}{#1}}
\def \A {B}
\def \a {\alpha}
\def \e {\varepsilon}
\def \s {\sigma}
\def \E {\mathbb{E}}
\def \N {\mathbb{N}}
\def \R {\mathbb{R}}
\def \MM {\mathcal{M}}
\def \NN {\mathcal{N}}
\def \one {\mathbbm{1}}
\DeclareMathOperator{\diam}{diam}
\renewcommand{\Pr}[2][]{\mathbb{P}_{#1} \left\{ #2 \rule{0mm}{3mm}\right\}}
\title{Metric geometry of the privacy-utility tradeoff}
\author{March Boedihardjo}
\address{Department of Mathematics, Michigan State University, East Lansing, USA}
\email{boedihar@msu.edu}
\author{Thomas Strohmer}
\address{Department of Mathematics, University of California, Davis, USA}
\email{strohmer@math.ucdavis.edu}
\thanks{T.S. acknowledges support from NIH R01HL16351, NSF DMS-2027248,  and NSF DMS-2208356.}
\author{Roman Vershynin}
\address{Department of Mathematics, University of California, Irvine, USA}
\email{rvershyn@uci.edu}
\thanks{R.V. acknowledges support from NSF DMS-1954233, NSF DMS-2027299, U.S. Army 76649-CS, and NSF+Simons Research Collaborations on the
Mathematical and Scientific Foundations of Deep Learning.}
\begin{document}

\maketitle

\begin{abstract}
Synthetic data are an attractive concept to enable privacy in data sharing. A fundamental question is  how similar the privacy-preserving synthetic data are compared to the true data. Using metric privacy, an effective generalization of differential privacy beyond the discrete setting, we raise the problem of characterizing the optimal privacy-accuracy tradeoff by the metric geometry of the underlying space. We provide a partial solution to this problem in terms of the ``entropic scale'', a quantity that captures the multiscale geometry of a metric space via the behavior of its packing numbers. We illustrate the applicability of our privacy-accuracy tradeoff framework via a diverse set of examples of metric spaces.

\end{abstract}


\section{Introduction}

A compelling approach to enable privacy in data sharing is based on the concept
of synthetic data~\cite{bellovin2019privacy}. The goal of synthetic data is to create a dataset that maintains the statistical properties of the original data while not exposing sensitive information. A fundamental challenge is to derive optimal bounds on the achievable utility
of synthetic data while maintaining privacy.  Analyzing this privacy-utility tradeoff is the goal of this paper. 

There are numerous versions of this problem, depending on the specific notion of privacy and the choice of the utility metric.
A popular and rigorous framework to define and quantify privacy is differential privacy. While differential privacy is a concept of the discrete world (where datasets can differ in a single element), it is often necessary to have more flexibility in the type of input data. Metric privacy, introduced in~\cite{chatzikokolakis2013broadening}, provides this flexibility, as it generalizes differential privacy beyond the discrete setting. 
For that reason we will adopt metric privacy as our notion of privacy in this paper.
Metric privacy, tailored to the setting of synthetic data, is defined as follows:
\begin{definition}[Metric privacy]              \label{def: metric privacy}
  Let $(Z,\rho_1)$ be a  metric space and $\a>0$.
  A randomized algorithm $\MM: Z \to Z$
  is called $(\a,\rho_1)$-metrically private
  if, for any inputs $x,x' \in Z$ we have
  \begin{equation}      \label{eq: metric DP}
  \frac{\Pr{\MM(x) \in S}}{\Pr{\MM(x') \in S}} \le \exp \left( \a \, \rho_1(x,x') \right).
  \end{equation}
\end{definition}
Metric privacy has been utilized in various applications, such as in location privacy~\cite{ABCP} and privacy-preserving machine learning see e.g.~\cite{fernandes2019generalised,galli2023advancing}, as well as in theoretical studies of  synthetic data, cf.~\cite{boedihardjo2022measure}. Metric privacy includes classical differential privacy as special case, see e.g.~Lemma 4.1 in~\cite{boedihardjo2022measure}.

With our notion of privacy in place, how shall we quantify utility, i.e., measure accuracy? 
Naturally, we want to measure how much the privacy-preserving synthetic data $\MM(x)$ resemble the true data $x$. To that end we choose a (possibly different) metric $\rho_2$ and define:   

\begin{definition}[Optimal accuracy]
    Let $(Z,\rho_1,\rho_2)$ be a bimetric space\footnote{By a bimetric space we mean a set with two metrics defined on it.} and $\a>0$. Define the accuracy of the bimetric space as  
    $$
    A(Z,\alpha) = \inf_{\mathcal{M}} \sup_{x\in Z} \,\mathbb{E}\, \rho_{2}(\mathcal{M}(x),x),
    $$
    where the infimum is over all $(\alpha,\rho_1)$-metrically private mechanisms $\mathcal{M}: Z \to Z$.
\end{definition}

Thus, $A(Z,\alpha)$ gives the best accuracy of synthetic data for a given privacy budget. 

\begin{problem}   \label{problem: main}
    Express the accuracy $A(Z,\alpha)$ in terms of the geometry of the bimetric space $Z$.
\end{problem}

\subsection{State of the art}

The study of the privacy-accuracy tradeoff for classical differential privacy
has a long history, a detailed review of which is beyond the scope of this paper. 
The standard book on DP~\cite{dwork2014algorithmic} contains detailed discussions on achievable rates of accuracy for a given privacy budget.
The paper~\cite{ghosh2009universally} proves that for each fixed count query and a given differential
privacy budget, there exists a mechanism that is expected loss-minimizing.
Hardt and Talwar~\cite{hardttalwar} use methods from convex geometry to determine a nearly optimal tradeoff between privacy and accuracy, where the error is measured by the Euclidean distance between the correct answer and the actual answer of a linear query. 
Nikolov, Talwar and Zhang~\cite{nikolov2013geometry} extend these
results to the case of $(\epsilon,\delta)$-differential privacy.
In~\cite{gupte2010universally} Gupte and Sundararaja adopt a framework of risk-averse agents to derive a universal
optimality result  using a
minimax formulation.
Geng and Viswanath~\cite{geng2014optimal} give an  optimal differentially private mechanism for certain query functions under
a  general utility-maximization framework.
In~\cite{nikolov2023private} Nikolov derives differentially private mechanisms with optimal worst case sample complexity under mean square average error for statistical queries. 

Regarding the privacy-utilty tradeoff for  synthetic data, Ullman and Vadhan~\cite{ullman2011pcps} prove that (under standard cryptographic assumptions) there is no polynomial-time differentially private algorithm that can generate synthetic Boolean data such that all two-dimensional marginals  are close to those of the original Boolean dataset. In~\cite{boedihardjo2022covariance} the authors derive an computationally efficient algorithm to construct DP
 synthetic data with approximately optimal average error.
The paper~\cite{boedihardjo2022measure} establishes 
 asymptotically sharp privacy-accuracy bounds for synthetic data for general compact metric spaces.

The privacy-utility tradeoff for metric privacy is less well-understood than
for classical DP, since the achievable accuracy is strongly influenced by the geometry of the space.
In~\cite{bordenabe2014optimal} the tradeoff between location privacy and utility is analyzed by constructing for a given privacy budget a mechanism that minimizes utility loss  using linear programming techniques. 
In~\cite{imola2022balancing}, the authors consider linear programming-based metric privacy mechanisms that balance the tradeoff between utility and computational complexity. The paper~\cite{fernandes2022universal} investigates the privacy-accuracy tradeoff for metric privacy from an information theoretic perspective.

In this paper, we study the optimal privacy-accuracy tradeoff in terms of the geometry of the metric spaces, but we do not account for computational feasibility.

\subsection{Our contributions}

We do not know how to approach Problem~\ref{problem: main} in full generality. However, we can solve it in two special cases: 
when $\rho_1=\rho_2$ (Theorem~\ref{thm: same metric}) and when $\rho_2$ is an ultrametric (Theorem~\ref{thm: ultrametric}). 
In both results, the geometry of a bimetric space $(Z,\rho_1,\rho_2)$ is captured by the packing numbers\footnote{Working with packing numbers instead of covering numbers is simply a choice of convenience in this paper. Packing and covering numbers are equivalent (see e.g.~\cite[Lemma~4.2.8]{vershyninbook})~
so versions of all our results hold stated for covering numbers as well.} and diameters of the balls. 

To be specific, let $(Z,\rho)$ be a metric space. The {\em packing number} of $Z$  at scale $\e>0$ is denoted by $N(Z,\e)$. It is the maximal cardinality of an $\e$-separated subset $\NN \subset Z$, i.e. a subset in which all the points have distance strictly greater than $\e$ from each other. The {\em closed ball} centered at $x \in Z$ and with radius $r$ is denoted by $B(x,r)$. 

If $(Z,\rho_1,\rho_2)$ a bimetric space, we often use subscripts $1$ and $2$ to indicate which metric is being used. For example, $B_1(x,r)$ denotes the ball in the metric $\rho_1$, and $N_2(\A,\e)$ denotes the packing number of $\A$ in the metric $\rho_2$, where $\A\subset Z$.

Our crucial quantity is the scale on which the packing numbers of all balls are subexponential in radius: 

\begin{definition}[Entropic scale]  \label{def: entropic scale}
    The entropic scale of a bimetric space $(Z,\rho_1,\rho_2)$ is defined as follows:
    $$
    s(Z,\a)
    = \inf\left\{s>0 \;\Big\vert\; N_2 \left(B_1(x,r),s\right)\leq e^{\alpha r} \; \forall x\in Z, \; \forall r>0 \right\}.
    $$
\end{definition}

Our first result states that if $\rho_1=\rho_2=\rho$, the optimal accuracy is equivalent to the entropic scale: 

\begin{theorem}[Same metric]   \label{thm: same metric}
    Let $(Z,\rho)$ be a connected metric space and $\a>0$. Then\footnote{We use the notation $\lesssim$, $\gtrsim$ and $\asymp$ to hide positive absolute constant factors. Thus, $a \lesssim b$ means that $a \le Cb$, and $a \asymp b$ means that $ca \le b \le Cb$, where $C$ and $c$ are positive absolute constants. The specific numeric values of $C$ and $c$ can usually be derived from the proofs.} 
    $$
    s(Z,2\a)
    \lesssim A(Z,\a) 
    \lesssim s(Z,\a/3).
    $$
    Moreover, if $(Z,\rho)$ is norm-convex\footnote{A metric space $(Z,\rho)$ is called {\em norm-convex} if $Z$ is a convex subset of some normed space with the induced metric.} then 
    $$
    A(Z,\a) \asymp 
    s(Z,\a). 
    $$
\end{theorem}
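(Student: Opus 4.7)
The proof separates into the lower bound $s(Z,2\a)\lesssim A(Z,\a)$, the upper bound $A(Z,\a)\lesssim s(Z,\a/3)$, and a refinement in the norm-convex setting. For the \emph{lower bound}, I would start from a nearly optimal $(\a,\rho)$-metrically private $\MM$ satisfying $\sup_x\E\rho(\MM(x),x)\le A:=A(Z,\a)$, so that Markov's inequality gives $\Pr{\MM(x)\in B(x,2A)}\ge 1/2$ for all $x$. Given $x\in Z$ and $r>0$, pick a maximal $4A$-separated family $\{x_1,\dots,x_N\}\subset B(x,r)$, making the balls $B(x_i,2A)$ pairwise disjoint. The union bound $\sum_i\Pr{\MM(x_1)\in B(x_i,2A)}\le 1$ together with the metric-privacy lower bound $\Pr{\MM(x_1)\in B(x_i,2A)}\ge e^{-\a\rho(x_1,x_i)}\Pr{\MM(x_i)\in B(x_i,2A)}\ge \tfrac12 e^{-2\a r}$ yields $N\le 2e^{2\a r}$, hence $N(B(x,r),4A)\le 2e^{2\a r}$ for every $x,r$. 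After absorbing the factor $2$ into the exponent (valid because $N=1$ unless $r$ is at least a constant multiple of $A$, so $\log 2$ is dominated by $2\a r$ in the range that matters), this gives $s(Z,2\a)\lesssim A$.

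For the \emph{upper bound} I plan to use the exponential mechanism on an $s$-net. Let $s:=s(Z,\a/3)$ and fix a maximal $s$-separated subset $\NN\subseteq Z$; by maximality, $\NN$ is automatically $s$-covering. Define
\[
\Pr{\MM(x)=y}=\frac{e^{-(\a/2)\rho(x,y)}}{Z(x)},\qquad Z(x):=\sum_{y\in\NN}e^{-(\a/2)\rho(x,y)},\qquad y\in\NN.
\]
Privacy is the standard exponential-mechanism calculation: the $1$-Lipschitz bound $|\rho(x,y)-\rho(x',y)|\le\rho(x,x')$ contributes $e^{(\a/2)\rho(x,x')}$ in both the numerator and the normalizer ratio, for a total $e^{\a\rho(x,x')}$. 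The heart of the accuracy argument is the layer-cake identity $\sum_{y\in\NN}g(\rho(x,y))=\int_0^\infty|\NN\cap B(x,t)|\,(-g'(t))\,dt$ for non-increasing $g$. Plugging in $g(t)=e^{-(\a/2)t}$ with the pointwise packing bound $|\NN\cap B(x,t)|\le N(B(x,t),s)\le e^{\a t/3}$ yields $Z(x)\le 3$, and an analogous computation for the restricted sum gives $\sum_{y:\rho(x,y)>t}e^{-(\a/2)\rho(x,y)}\lesssim e^{-\a t/6}$. Combined with the crude covering lower bound $Z(x)\ge e^{-\a s/2}$, this produces $\Pr{\rho(\MM(x),x)>t}\lesssim e^{\a s/2}e^{-\a t/6}$, and integrating gives $\E\rho(\MM(x),x)\lesssim s+1/\a$. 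Connectedness of $Z$ forces $\a s\gtrsim 1$ (via the intermediate-value property of $\rho(x_0,\cdot)$ on a connected set, which produces, whenever $\diam Z>s$, two points in $B(x_0,s+\e)$ at distance exceeding $s$, so $2\le e^{\a(s+\e)/3}$); otherwise $\diam Z\le s$ and the trivial constant mechanism already has error $\le s$. Either way $\E\rho\lesssim s$.

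For the \emph{norm-convex refinement}, I would swap the discrete net for the continuous Laplace mechanism with density $\propto e^{-\a\|y-x\|}$ on $Z$ (against Lebesgue measure restricted to $Z$). The point is that translation invariance of the ambient norm, together with convexity of $Z$, permits a sharp control of the normalizer ratio $Z(x)/Z(x')$, allowing privacy to be attained at the improved rate $\b=\a$ rather than $\b=\a/2$. Rerunning the layer-cake computation at this improved exponent, and using the polynomial volume comparison available for convex sets to bound $\mathrm{vol}(Z\cap B(x,t))$ in terms of packing numbers at scale $s$, yields $A\lesssim s(Z,\a)$; in this setting the lower bound $s(Z,2\a)\lesssim A$ is also comparable to $s(Z,\a)$ by the same polynomial growth, producing $A\asymp s(Z,\a)$.

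The \emph{main obstacle} is the tight window between privacy (which forces $\b\le\a/2$) and convergence of the accuracy integral (which demands $\b>\a/3$); this narrow gap is exactly what produces the factor $1/3$ in the right-hand side of the general bound. The layer-cake step is essential here — a dyadic estimate of $|\NN\cap B(x,2^k s)|$ by its upper envelope $e^{\a 2^k s/3}$ would yield a divergent geometric series — so folding the pointwise packing bound into a single integral is what lets the argument close.
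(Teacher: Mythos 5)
Your treatment of the general (connected) case follows the paper's route: the lower bound via disjoint balls, Markov's inequality and the privacy inequality (Theorem~\ref{thm: A>s}), and the upper bound via the exponential mechanism on a maximal $s$-separated net at $s=s(Z,\alpha/3)$, with connectedness used to absorb the additive $1/\alpha$ (Lemma~\ref{lem: exponential mechanism} combined with Lemmas~\ref{lem: 1/a large} and~\ref{lem: 1/a small}). Your layer-cake evaluation of the normalizer and tail is an acceptable substitute for the paper's shells of width $1/\alpha$, and your intermediate-value argument is a fine replacement for the paper's clopen-ball argument. However, the constant-absorption step in your lower bound is not justified: from $N(B(x,r),4A)\le 2e^{2\alpha r}$ you cannot conclude $N(B(x,r),4A)\le e^{2\alpha r}$ on the grounds that ``$N=1$ unless $r\gtrsim A$,'' because $r\gtrsim A$ does not imply $\alpha r\gtrsim 1$: the accuracy $A$ may be far smaller than $1/\alpha$ (e.g.\ for spaces of diameter $\ll 1/\alpha$), so in the window $2A<r<\ln 2/(2\alpha)$ your bound permits $N=2$ while $e^{2\alpha r}<2$. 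Your own estimate cannot close this case: with Markov at level $1/2$ the two-point inequality reads $1\ge \tfrac12+\tfrac12 e^{-2\alpha r}$, which is vacuous. The paper avoids this by using balls of radius $4s$ (so Markov gives probability $3/4$), separation $8s$, and comparing all probabilities to $\mathcal{M}(x)$ at the center $x$, which yields $N\le\big\lfloor \tfrac43 e^{\alpha r}\big\rfloor\le e^{2\alpha r}$ for \emph{every} $r$; the same fix (Markov level strictly above $1/2$ plus integrality) repairs your argument at the cost of constants only.

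The ``moreover'' part is where your proposal genuinely breaks down. The paper obtains $A(Z,\alpha)\asymp s(Z,\alpha)$ with no new mechanism: norm-convexity gives the homothety $B(x,tr)\supset(1-t)x+tB(x,r)$, hence the monotonicity $N(B(x,r),s)\le N(B(x,tr),ts)$ (Lemma~\ref{lem: monotonicity}) and the regularity $s(Z,t\alpha)\le t^{-1}s(Z,\alpha)$ (Proposition~\ref{prop: regularity}), so $s(Z,\alpha/3)\le 3\,s(Z,\alpha)$ and $s(Z,2\alpha)\ge\tfrac12\,s(Z,\alpha)$, and the already-proved general bounds give the equivalence. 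Your proposed replacement---a Laplace density $\propto e^{-\alpha\|y-x\|}$ with respect to Lebesgue measure on $Z$---does not work as stated: the norm-convex case of the theorem includes infinite-dimensional convex sets (e.g.\ the Wasserstein and Lipschitz examples, Theorems~\ref{thm: Wasserstein} and~\ref{thm: Lipschitz}), where there is no Lebesgue measure; even in finite dimension, translation invariance does not control the normalizer ratio of the \emph{restricted} density for free (the generic bound costs another factor $e^{\alpha\rho(x,x')}$, i.e.\ privacy parameter $2\alpha$, and near the boundary of a bounded convex body the normalizer genuinely varies); and even granting decay rate exactly $\alpha$, the accuracy integral against mass growth $e^{\alpha r}$ at scale $s(Z,\alpha)$ is borderline divergent, so the computation would not close. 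The missing idea is precisely the regularity/homothety lemma, which converts the general bounds at parameters $\alpha/3$ and $2\alpha$ into bounds at $\alpha$.
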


The lower bound in Theorem~\ref{thm: same metric} holds in full generality, even when the metrics $\rho_{1}$ and $\rho_{2}$ are different, and it is proved in Theorem~\ref{thm: A>s}. The upper bound is proved in Theorem~\ref{thm: A<s}. The ``moreover'' part follows from regularity of the entropic scale, which is proved in Lemma~\ref{prop: regularity}.

If $(Z,\rho)$ is disconnected, then the entropic scale may fail to control accuracy. (Indeed, for a two-point metric space, all packing numbers are bounded by $2$, while the distance between the points and thus the accuracy may be arbitrarily large.) To fix this issue, we introduce the following cousin of the entropic scale:

\begin{definition}[Diametric scale]  \label{def: diametric scale}
    The diametric scale of a bimetric space $(Z,\rho_1,\rho_2)$ is defined as follows:
    $$
    s_\circ(Z,\a)
    \coloneqq \inf\left\{s>0 \;\Big\vert\; 
    \diam_2(B_1(x,r))\leq e^{\alpha r} s \; \forall x\in Z, \; \forall r>0 \right\},
    $$
where $\diam_2(B_1(x,r))$ is the diameter of $B_{1}(x,r)$ with respect to the metric $\rho_{2}$.
\end{definition}


Our second main result is concerned with ultrametrics. We note that ultrametrics have attracted increasing attention in data science in recent years, due to their usefullness in areas such as genomics and chemoinformatics~\cite{contreras2012fast,dragovich2017ultrametrics}. We 
 demonstrate that if $\rho_2$ is an ultrametric, the accuracy is equivalent to the sum of the entropic scale and the diametric scale:

\begin{theorem}[Ultrametric]\label{thm: ultrametric}
    Let $(Z,\rho_1,\rho_2)$ be a bimetric space. Suppose that $\rho_2$ is an ultrametric. Then for any $\a>0$ we have
    $$
    s(Z,2\a)+s_\circ(Z,2\a) 
    \lesssim A(Z,\a) 
    \lesssim s(Z,\a/3)+s_\circ(Z,\a/7).
    $$
\end{theorem}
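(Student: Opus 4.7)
The lower bound splits naturally. The estimate $s(Z,2\alpha)\lesssim A(Z,\alpha)$ is inherited from Theorem~\ref{thm: A>s}, which the paper establishes in full generality without the ultrametric hypothesis, so I concentrate on the diametric piece $s_\circ(Z,2\alpha)\lesssim A(Z,\alpha)$. Fix an $(\alpha,\rho_1)$-metrically private $\MM$ with $\sup_{x}\E\rho_2(\MM(x),x)\le A$, and pick any $x\in Z$, $r>0$, and $y,y'\in B_1(x,r)$, so that $\rho_1(y,y')\le 2r$. Markov's inequality applied to the nonnegative random variable $\rho_2(\MM(y),y)$ gives $\Pr{\MM(y)\in S}\ge 1/2$ for $S:=\{z\in Z:\rho_2(z,y)\le 2A\}$, and the metric-privacy group property yields $\Pr{\MM(y')\in S}\ge \tfrac{1}{2}e^{-\alpha\rho_1(y,y')}\ge \tfrac{1}{2}e^{-2\alpha r}$. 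Now the ultrametric enters: if $\rho_2(y,y')>2A$, then for any outcome $w\in S$ we have $\rho_2(w,y)\le 2A<\rho_2(y,y')$, and the strong triangle inequality forces $\rho_2(w,y')=\rho_2(y,y')$. Hence $A\ge \E\rho_2(\MM(y'),y')\ge \tfrac{1}{2}e^{-2\alpha r}\rho_2(y,y')$, so in every case $\rho_2(y,y')\le 2A\,e^{2\alpha r}$. Taking suprema over $y,y'\in B_1(x,r)$ gives $\diam_2(B_1(x,r))\le 2A\,e^{2\alpha r}$ for all $x,r$, which is exactly $s_\circ(Z,2\alpha)\lesssim A$.

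For the upper bound, set $t\asymp s(Z,\alpha/3)+s_\circ(Z,\alpha/7)$ and exploit the fact that, by the strong triangle inequality, $y\sim y'\iff \rho_2(y,y')\le t$ is an equivalence relation on $Z$, each of whose classes has $\rho_2$-diameter at most $t$. Choose a representative from each class to form $\NN\subset Z$; this $\NN$ is $t$-separated in $\rho_2$, so by the definition of the entropic scale $|\NN\cap B_1(x,r)|\le N_2(B_1(x,r),t)\le e^{\alpha r/3}$ for every $x\in Z$ and $r>0$. Define $\MM(x)$ by sampling a representative $z\in\NN$ with probability proportional to $e^{-\alpha\rho_1(x,z)/2}$ and returning any point in the cell of $z$; this is $(\alpha,\rho_1)$-metrically private by the standard exponential-mechanism calculation. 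For accuracy, decompose $\rho_2(\MM(x),x)\le t+\rho_2(z,x)$ and bound $\rho_2(z,x)\le s_\circ(Z,\alpha/7)\,e^{(\alpha/7)\rho_1(x,z)}$ directly from the diametric-scale definition. Integrating $e^{(\alpha/7-\alpha/2)\rho_1(x,z)}$ against the dyadic-shell bound $|\NN\cap B_1(x,r)|\le e^{\alpha r/3}$, the effective exponent $\alpha/3-\alpha/2+\alpha/7=-\alpha/42<0$ makes the series converge to an $O(1)$ quantity, while a matching lower bound on the partition function (contributed by the representatives near $x$ in $\rho_1$) gives $\E\rho_2(\MM(x),x)\lesssim s(Z,\alpha/3)+s_\circ(Z,\alpha/7)$.

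The crux is the upper bound: the entropic and diametric scales, although conceptually independent, are entangled through the normalization of the exponential mechanism, and it must be verified that they combine additively in the accuracy estimate rather than multiplicatively. The specific constants $\alpha/3$ and $\alpha/7$ are precisely those for which the balancing condition $\alpha/3-\alpha/2+\alpha/7<0$ holds, so that the relevant series converges. The ultrametric hypothesis plays a dual role throughout: it produces the clean partition of $Z$ into $\rho_2$-balls that underlies the mechanism, and it supplies the rigid identity $\rho_2(\MM(y'),y')=\rho_2(y,y')$ that drives the diametric lower bound.
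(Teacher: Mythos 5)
Your lower bound is fine: the entropic half is correctly quoted from Theorem~\ref{thm: A>s}, and your argument for $s_\circ(Z,2\alpha)\lesssim A(Z,\alpha)$ via the isosceles property of the ultrametric (any output $w$ with $\rho_2(w,y)\le 2A<\rho_2(y,y')$ has $\rho_2(w,y')=\rho_2(y,y')$) is a valid alternative to the paper's Theorem~\ref{thm: A>scirc}; the paper's version combines two Markov bounds with the ordinary triangle inequality and therefore holds for arbitrary $\rho_2$, whereas yours needs the ultrametric, which is harmless here.

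The upper bound, however, has a genuine gap, and it sits exactly where you wave your hands: ``a matching lower bound on the partition function (contributed by the representatives near $x$ in $\rho_1$).'' Your net $\NN$ is one representative per $\rho_2$-equivalence class, so the representative of the class containing $x$ is guaranteed to be $\rho_2$-close to $x$, but nothing controls its $\rho_1$-distance to $x$; there need not be \emph{any} element of $\NN$ that is $\rho_1$-close to $x$, and then $\Sigma(x)=\sum_{z\in\NN}e^{-\alpha\rho_1(x,z)/2}$ can be exponentially small, destroying the accuracy estimate. Worse, the mechanism itself (not just its analysis) fails. Take $Z=\{a,b,c\}$ with $\rho_2(a,b)=\epsilon$ tiny and $\rho_2(a,c)=\rho_2(b,c)=D$, and $\rho_1(a,c)=r_0=100/\alpha$, $\rho_1(a,b)=R\gg r_0$, $\rho_1(b,c)=R-r_0$. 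Then $s(Z,\alpha/3)=0$ and $s_\circ(Z,\alpha/7)=De^{-\alpha r_0/7}\approx 6\cdot10^{-7}D$, so the theorem promises error $\lesssim 6\cdot10^{-7}D$. Your classes are $\{a,b\}$ and $\{c\}$; whichever of $a,b$ is chosen as representative, the other point $x$ has $c$ as its strictly $\rho_1$-nearest representative (by a margin $r_0$), so on input $x$ your mechanism outputs $c$ with probability $1-e^{-\Omega(\alpha r_0)}$ and incurs error $\approx D$, off by the unbounded factor $e^{\alpha r_0/7}$. This is precisely the problem the paper's ``relaxed'' exponential mechanism is built to avoid: it scores a net point $y$ not by $\rho_1(x,y)$ but by $\sigma(x,y)=\inf\{\rho_1(x,v):v\in B_2(y,s)\}$, so the net point whose $\rho_2$-ball covers $x$ has score $0$ and $\Sigma(x)\ge 1$ unconditionally (equation~\eqref{eq: Sigma large}), while $\sigma$ is still $1$-Lipschitz in $\rho_1$ (privacy) and the counting bound for relaxed balls survives via the ultrametric (Lemma~\ref{lem: relaxed ball}). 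Your exponent bookkeeping ($\alpha/3-\alpha/2+\alpha/7<0$) is correct but only bounds the numerator; without the relaxation step there is no valid bound on the denominator, and no choice of representatives repairs this.
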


The lower bound in Theorem~\ref{thm: ultrametric} holds in full generality, even if $\rho_2$ is a general metric, and it is proved in Theorems~\ref{thm: A>s} and \ref{thm: A>scirc}. The upper bound is proved in Theorem~\ref{thm: ultrametric full}.

\subsection{Examples}\label{s: examples intro}

Theorem~\ref{thm: same metric} and Theorem~\ref{thm: ultrametric}  allow us to compute the privacy-accuracy tradeoff for many metric spaces up to absolute constant factors. In particular:

\begin{enumerate}
    \item(Theorem~\ref{thm: unit ball}): If $Z$ is a closed {\em ball of a $d$-dimensional normed space}, then 
    \begin{equation}\label{eq: unit ball}
        A(Z,\a) \asymp s(Z,\a) \asymp \min \left( \frac{d}{\a}, 1\right).
    \end{equation}


   \item(Theorem~\ref{thm: Wasserstein}):
   If $Z$ is the set of all {\em probability measures} on the unit cube $[0,1]^d$, equipped with the $1$-Wasserstein metric with respect to the $\ell_\infty$-norm on $[0,1]^{d}$, then 
    \begin{equation}\label{eq: Wasserstein}
        A(Z,\a) \asymp s(Z,\a) \asymp \min\left( \a^{-\frac{1}{d+1}},1\right).
   \end{equation}

    \item(Theorem~\ref{thm: Lipschitz}): 
    If $Z$ is the set of all real-valued {\em $1$-Lipschitz functions} on the unit cube $[0,1]^d$ with $f(0)=0$, equipped with the $\ell^\infty$ metric, then \eqref{eq: Wasserstein} again holds. 

    \item(Remark~\ref{rem: Baire}):
    If $Z$ is the  {\em Boolean cube} $\{0,1\}^d$ equipped with the Baire ultrametric~\cite{contreras2012fast} (with base 2), then 
    \begin{equation}
    \label{ultraboolean}
        A(Z,\a) \asymp \min\left(\frac{1}{\a},1\right).
    \end{equation}
    A generalization of this example is obtained in Theorem \ref{thm:ultraexample}, where the two metrics $\rho_{1}$ and $\rho_{2}$ could be different.
    
\end{enumerate}

\subsection{Some open problems}\label{s:openproblems}

 The theorems  stated above motivate the following interesting  questions:
\begin{itemize}   
 \item[(i)] Can we solve Problem~\ref{problem: main} if we assume $\rho_2 \le \rho_1$?
\item[(ii)]  
    It seems plausible that the conclusion of Theorem~\ref{thm: ultrametric} holds if $\rho_1=\rho_2$ are the same metric but not necessarily an ultrametric. In other words,  can we drop the connectivity assumption in Theorem~\ref{thm: same metric} by including $s_\circ$ in the bound?
    \item[(iii)] 
Does the conclusion of Theorem~\ref{thm: ultrametric} hold for any bimetric space, perhaps up to logarithmic factors? 
   
    \end{itemize}

\subsection{Plan of the paper}

Section~\ref{s: basic properties} introduces some tools that can help to compute the entropic scale when the two metrics $\rho_{1}$ and $\rho_{2}$ are the same. For instance, in Subsection~\ref{s: doubling scale} we show that for a wide class of metric spaces (specifically, norm-convex metric spaces), one can always choose $r=2s$ in the Definition~\ref{def: entropic scale} of the entropic scale. In Subsection~\ref{s: outer scale}, we notice another simplification of the definition of entropic scale: it is often enough to compute the packing numbers of the entire metric space $Z$ rather than of all balls. 
In Section~\ref{s: lower bounds}, we show that accuracy is always bounded below by the entropic scale (Theorem~\ref{thm: A>s}) and the diametric scale (Theorem~\ref{thm: A>scirc}). This establishes the lower bounds in both of our main results Theorems~\ref{thm: same metric} and \ref{thm: ultrametric}. 
In Section~\ref{s: upper bound: same metric}, we show that if $\rho_1=\rho_2$, the accuracy is bounded above by the entropic scale, thus completing the proof of Theorem~\ref{thm: same metric}.
In Section~\ref{s: ultrametric}, we show that if $\rho_2$ is an ultrametric, the accuracy is bounded above by the sum of entropic and diametric scales, thus completing the proof of Theorem~\ref{thm: ultrametric}.
In Section~\ref{s: examples}, we work out the examples announced in Section~\ref{s: examples intro}.

\section{Basic properties of the entropic scale: same metric}\label{s: basic properties}
Throughout this section, we assume that the two metrics $\rho_{1}=\rho_{2}=\rho$ are the same.
\subsection{Background on packing and covering}

\begin{fact}[Packing implies covering]  \label{fact: covering}
    Let $(Z,\rho)$ be a metric space and $\A \subset Z$ be any subset. Then any maximal $\e$-separated subset $\NN \subset \A$ is an $\e$-cover of $\A$, i.e. 
    $$
    \forall x \in \A \; \exists y \in \NN: \; \rho(x,y) \le \e.
    $$
\end{fact}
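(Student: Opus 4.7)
The plan is a one-step argument by contradiction that exploits the maximality hypothesis directly. Suppose for contradiction that $\NN$ fails to be an $\e$-cover of $\A$. Then there exists some point $x \in \A$ such that $\rho(x,y) > \e$ for every $y \in \NN$.

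Next I would form the augmented set $\NN' = \NN \cup \{x\} \subset \A$ and verify that it is still $\e$-separated. Any two points already in $\NN$ are at distance strictly greater than $\e$ because $\NN$ itself is $\e$-separated; meanwhile, the new point $x$ is at distance strictly greater than $\e$ from every $y \in \NN$ by our choice of $x$. Since $\NN \subsetneq \NN'$, this contradicts the maximality of $\NN$ as an $\e$-separated subset of $\A$, completing the proof.

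There is essentially no obstacle here: the whole content of the statement is a restatement of what ``maximal'' means. The one conceptual point worth keeping in mind is that ``maximal'' is used in the set-theoretic sense (not properly contained in any larger $\e$-separated subset) rather than the cardinality sense; it is precisely this interpretation that the contradiction step uses. Existence of such a maximal $\NN$ is not part of the claim, so no Zorn-type argument is needed in this fact.
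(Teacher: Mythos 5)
Your proof is correct and follows the same contradiction argument as the paper: a point $x \in \A$ with $\rho(x,y) > \e$ for all $y \in \NN$ would make $\NN \cup \{x\}$ a strictly larger $\e$-separated subset of $\A$, contradicting maximality. Your remark about the set-theoretic (rather than cardinality) sense of ``maximal'' is a fine clarification but adds nothing beyond the paper's argument.
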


\begin{proof}
    For contradiction, assume that there exists $x \in \A$ such that $\rho(x,y)>\e$ for all $y \in \NN$. Thus $x \not\in\NN$, and $\NN \cup \{x\}$ is an $\e$-separated subset of $\A$. This contradicts the maximality of $\NN$.
\end{proof}

\begin{fact}[Chain rule] \label{fact: chain rule one term}
    Let $(Z,\rho)$ be a metric space and $\A \subset Z$ be any subset. Then for any $r,s>0$ we have
    $$
    N(\A,s) 
    \le N(\A,r) \cdot \sup_{a \in \A} N\left(B(a,r),s\right).
    $$
\end{fact}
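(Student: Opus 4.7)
The plan is to bound $N(\A,s)$ by partitioning a maximum $s$-separated set in $\A$ according to which $r$-ball it lies in, where the balls come from an $r$-net of $\A$. This is the standard telescoping idea behind chain rules for metric entropy, and it reduces the problem to counting how many $s$-separated points can fit inside a single $r$-ball.

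Concretely, first I would take a maximum $r$-separated subset $\NN \subset \A$, so that $|\NN| = N(\A,r)$. Since $\NN$ is in particular a \emph{maximal} $r$-separated subset, the previous Fact (Packing implies covering) gives that $\NN$ is an $r$-cover of $\A$, i.e.\ every point of $\A$ lies within $\rho$-distance $r$ of some element of $\NN$. Next I would take a maximum $s$-separated subset $\MM \subset \A$, so that $|\MM| = N(\A,s)$. Using the covering property, for each $m \in \MM$ pick some $n(m) \in \NN$ with $m \in B(n(m), r)$, which partitions $\MM$ as
$$
\MM = \bigsqcup_{n \in \NN} \MM_n, \qquad \MM_n \subset \MM \cap B(n,r).
$$

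Now each $\MM_n$ is an $s$-separated subset of $B(n,r)$ (separation is inherited from $\MM$), so by the definition of packing number $|\MM_n| \le N(B(n,r), s)$. Summing and bounding by the supremum gives
$$
N(\A,s) = |\MM| = \sum_{n \in \NN} |\MM_n| \le |\NN| \cdot \sup_{n \in \NN} N(B(n,r), s) \le N(\A,r) \cdot \sup_{a \in \A} N(B(a,r), s),
$$
where in the last step I use $\NN \subset \A$. This establishes the claim.

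There is no real obstacle here; the only small point to check is that one may indeed take $\NN$ to be simultaneously a \emph{maximum} $r$-separated set (to obtain $|\NN|=N(\A,r)$) and an $r$-cover (which requires only maximality in the inclusion sense), and these are compatible because any maximum-size separated set is in particular maximal.
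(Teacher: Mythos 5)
Your proposal is correct and follows essentially the same argument as the paper: cover $\A$ by balls of radius $r$ centered at a maximal $r$-separated subset, then bound the number of $s$-separated points in each ball by $N(B(a,r),s)$. The only cosmetic difference is that the paper works with an arbitrary $s$-separated subset (and a maximal-by-inclusion $r$-separated set whose cardinality is at most $N(\A,r)$) rather than insisting on maximum-cardinality sets, which sidesteps the small attainment point you address at the end.
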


\begin{proof}
    Let $\NN_s \subset \A$ be any $s$-separated subset. Let $\NN_r \subset \A$ be a maximal $r$-separated subset. 
    Then $\A$ can be covered by $\abs{\NN_r} \le N(\A,r)$ balls of radius $r$ centered at points in $\A$. Each such ball $B(a,r)$ contains at most $N\left(B(a,r),s\right)$ points from $\NN_s$. Therefore
    $$
    \abs{\NN_s} \le N(\A,r) \cdot \sup_{a \in \A} N\left(B(a,r),s\right).
    $$
    The proof is complete. 
\end{proof}

The chain rule yields the following for any $s>0$:
$$
\sup_{x \in Z} N\left(B(x,4s),s\right)
\le \sup_{x \in Z} N\left(B(x,4s),2s\right) \cdot \sup_{y \in Z} N\left(B(y,2s),s\right).
$$
Iterating this bound, we obtain:

\begin{fact}[Chain rule with many terms]\label{fact: chaining}
    Let $(Z,\rho)$ be a metric space. Then for any $s>0$ and $k_0 \in \N$ we have
    $$
    \sup_{x \in Z} N\left(B(x,s2^{k_0}),s\right)
    \le \prod_{k=1}^{k_0} \sup_{x_k \in Z} N\left(B(x_k,s2^k),s2^{k-1}\right).
    $$
\end{fact}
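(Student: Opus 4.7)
The plan is to proceed by induction on $k_0$, using Fact~\ref{fact: chain rule one term} as the inductive step. The base case $k_0=1$ is immediate, since both sides equal $\sup_{x\in Z} N(B(x,2s),s)$.

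For the inductive step, assume the bound holds for $k_0-1$. Fix any $x\in Z$ and apply Fact~\ref{fact: chain rule one term} with $\A = B(x,s2^{k_0})$, with the ``$r$'' there chosen as $s2^{k_0-1}$ and the ``$s$'' there kept as $s$. This yields
$$
N\bigl(B(x,s2^{k_0}),s\bigr)
\le N\bigl(B(x,s2^{k_0}),\,s2^{k_0-1}\bigr)
\cdot \sup_{a \in B(x,s2^{k_0})} N\bigl(B(a,s2^{k_0-1}),\,s\bigr).
$$
The first factor is bounded above by $\sup_{x_{k_0}\in Z} N(B(x_{k_0},s2^{k_0}),\,s2^{k_0-1})$, which is exactly the $k=k_0$ term of the target product. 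The second factor is bounded above by $\sup_{a \in Z} N(B(a,s2^{k_0-1}),s)$, to which the inductive hypothesis (applied with $k_0$ replaced by $k_0-1$) gives the bound $\prod_{k=1}^{k_0-1} \sup_{x_k\in Z} N(B(x_k,s2^k),\,s2^{k-1})$.

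Multiplying the two bounds produces the full product $\prod_{k=1}^{k_0}$. Since the resulting estimate is independent of $x$, we may take the supremum over $x\in Z$ on the left, completing the induction.

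There is no real obstacle here: the statement is essentially the telescoping of Fact~\ref{fact: chain rule one term} applied along the dyadic scales $s2^k$. The only minor point to keep track of is that passing from the sup over $a \in B(x,s2^{k_0})$ to the sup over $a\in Z$ only makes the bound larger, which is harmless for the claimed upper bound.
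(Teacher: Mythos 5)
Your proof is correct and follows essentially the same route as the paper: the paper states the two-scale inequality obtained from Fact~\ref{fact: chain rule one term} and simply says ``iterating this bound,'' which is exactly the induction on $k_0$ that you carry out explicitly (peeling off the $k=k_0$ factor and enlarging the supremum over $a \in B(x,s2^{k_0})$ to $a \in Z$). No gaps.
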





\subsection{Regularity of entropic scale}

\begin{proposition}[Regularity]   \label{prop: regularity}
    Let $(Z,\rho)$ be a norm-convex metric space and $\a>0$. Then
    $$
    s(Z,\a)
    \le s(Z,t\a) 
    \le\frac{1}{t} s(Z,\a)
    \quad \text{for all } t\in[0,1].
    $$
\end{proposition}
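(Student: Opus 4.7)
The first inequality is pure monotonicity in the privacy parameter. For fixed $s>0$, the condition $N(B(x,r),s)\le e^{\beta r}$ becomes stricter as $\beta$ decreases, so $s(Z,\beta)$ is nonincreasing in $\beta$. Since $t\a\le\a$, this gives $s(Z,\a)\le s(Z,t\a)$, with no convexity needed.

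For the second inequality $s(Z,t\a)\le s(Z,\a)/t$, the plan is to use norm-convexity to \emph{contract} toward the center of any ball. Given any $s$-separated subset $\NN\subset B(x,r)$, define
$$
\NN' \;=\; \{(1-t)x+ty:\; y\in\NN\}.
$$
Convexity of $Z$ places $\NN'\subset Z$; the norm is homogeneous, so the pairwise distances in $\NN'$ are exactly $t$ times those in $\NN$, making $\NN'$ a $(ts)$-separated subset; and each point lies in $B(x,tr)$. Therefore
$$
N\bigl(B(x,tr),\,ts\bigr)\;\ge\;N\bigl(B(x,r),\,s\bigr),
$$
or, reparametrizing,
$$
N\bigl(B(x,r),\,s/t\bigr)\;\le\;N\bigl(B(x,tr),\,s\bigr)\qquad \text{for all } x\in Z,\; r,s>0.
$$

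To conclude, fix any $s>s(Z,\a)$, so by definition $N(B(y,\rho),s)\le e^{\a\rho}$ for every $y\in Z$ and $\rho>0$. Applying the displayed inequality with $y=x$ and $\rho=tr$ gives
$$
N\bigl(B(x,r),\,s/t\bigr)\;\le\;N\bigl(B(x,tr),\,s\bigr)\;\le\;e^{\a(tr)}\;=\;e^{(t\a)r}
$$
for every $x\in Z$ and $r>0$. By the definition of entropic scale this forces $s/t\ge s(Z,t\a)$. Letting $s\downarrow s(Z,\a)$ yields $s(Z,\a)/t\ge s(Z,t\a)$, completing the proof.

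The only subtle step is orienting the scaling correctly. A naive attempt to expand a packing by factor $1/t$ would land the scaled set in a ball of radius $r/t$, producing the wrong bound $e^{\a r/t}$; the fix is to contract toward $x$, which shrinks the radius to $tr$ and produces the desired factor $t\a$ in the exponent. Once this orientation is chosen, everything is a one-line convexity argument together with the defining property of $s(Z,\a)$.
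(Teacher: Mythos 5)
Your proof is correct and follows essentially the same route as the paper: the contraction $y\mapsto(1-t)x+ty$ toward the center is exactly the paper's Monotonicity Lemma ($B(x,tr)\supset(1-t)x+tB(x,r)$, hence $N(B(x,r),s)\le N(B(x,tr),ts)$), which is then combined with the definition of the entropic scale just as you do. Your limiting argument $s\downarrow s(Z,\a)$ is a slightly more careful handling of the infimum than the paper's, but the substance is identical.
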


The proof is based on the following monotonicity property:

\begin{lemma}[Monotonicity]   \label{lem: monotonicity}
    Let $(Z,\rho)$ be a norm-convex metric space. Then for all $x\in Z$, $t\in[0,1]$ and $r,s>0$, we have
    $$
    N\left(B(x,r),s\right)
    \le N\left(B(x,tr),ts\right).
    $$
\end{lemma}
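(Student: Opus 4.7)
The plan is to exploit norm-convexity directly via the dilation map centered at $x$. Writing $\rho(a,b) = \|a-b\|$ for the norm on the ambient space, define, for $t \in (0,1]$, the affine map
$$
\phi: Z \to Z, \qquad \phi(y) = x + t(y-x) = (1-t)x + ty.
$$
Since $Z$ is convex and contains both $x$ and $y$, we have $\phi(y) \in Z$, so $\phi$ is well-defined into $Z$. It is a similarity with scaling factor $t$: $\|\phi(y) - \phi(y')\| = t\|y - y'\|$.

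Next I would verify the two properties needed. First, $\phi$ sends $B(x,r)$ into $B(x,tr)$: if $\|y-x\| \le r$, then $\|\phi(y) - x\| = t\|y-x\| \le tr$. Second, $\phi$ is injective (it is an invertible affine map for $t > 0$) and carries $s$-separated sets to $ts$-separated sets: if $\|y - y'\| > s$, then $\|\phi(y) - \phi(y')\| = t\|y - y'\| > ts$.

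Combining these, take a maximal $s$-separated set $\NN \subset B(x,r)$ of cardinality $N(B(x,r), s)$. Then $\phi(\NN) \subset B(x, tr)$ is a $ts$-separated set of the same cardinality, hence
$$
N(B(x,tr), ts) \;\ge\; |\phi(\NN)| \;=\; |\NN| \;=\; N(B(x,r), s),
$$
which is the claimed bound. The edge case $t = 0$ is trivial: the right-hand side is $N(\{x\}, 0) = 1$ unless the convention forces it to be $\infty$, in either case dominating the left side (or the statement is understood for $t \in (0,1]$, as only this range is used in Proposition~\ref{prop: regularity}).

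I do not anticipate a serious obstacle; the proof is essentially a one-line application of the scaling symmetry that norm-convexity grants. The only subtlety is that without norm-convexity, $\phi(y)$ need not lie in $Z$, so one genuinely uses the hypothesis to keep the image inside $Z$; this is the single place where the assumption enters.
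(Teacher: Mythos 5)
Your proof is correct and is essentially the paper's own argument: the paper states norm-convexity as $B(x,tr)\supset(1-t)x+tB(x,r)$ and uses that the dilation about $x$ scales distances by $t$, which is exactly your map $\phi$ made explicit. Your remark on the degenerate case $t=0$ is a reasonable way to read the statement, and the rest matches the paper's one-line proof.
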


\begin{proof}
Norm-convexity implies that 
$B(x,tr) \supset (1-t)x+tB(x,r)$.
Therefore
$$
N\left(B(x,tr),ts\right)
\ge N\left((1-t)x+tB(x,r),ts\right)
=N\left(B(x,r),s\right). \qedhere
$$
\end{proof}

\begin{proof}[Proof of Proposition~\ref{prop: regularity}]
The lower bound is trivial by the definition of entropic scale. To check the upper bound, set $s \coloneqq s(Z,\a)$ and fix any $x \in Z$. The definition of entropic scale guarantees that 
$$
N\left(B(x,r),s\right) \le e^{\a r}
\quad \text{for all } r>0.
$$
Using Lemma~\ref{lem: monotonicity} for $s/t$ instead of $s$, and then the bound above for $tr$ instead of $r$, we obtain
$$
N\left(B(x,r),s/t\right)
\le N\left(B(x,tr),s\right)
\le e^{\a \cdot tr}\quad \text{for all } r>0.
$$
By the definition of entropic scale, this yields $s(Z,t\a) \le s/t$. The proof is complete.
\end{proof}

\subsection{Elementary lower bounds for entropic scale}

\begin{lemma}   \label{lem: 1/a large}
    Let $(Z,\rho)$ be a metric space and $\a>0$. Then 
    $$
    \frac{1}{\a} \ge 2\diam(Z) 
    \quad \text{implies} \quad
    s(Z,\a) \ge \frac{1}{2} \diam(Z).
    $$
\end{lemma}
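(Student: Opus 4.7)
The plan is to contradict the definition of the entropic scale at every scale $s$ strictly less than $\tfrac{1}{2}\diam(Z)$. Concretely, I want to produce $x\in Z$ and $r>0$ such that $N(B(x,r),s) > e^{\a r}$, which forces $s < s(Z,\a)$. Taking $s \uparrow \tfrac{1}{2}\diam(Z)$ then gives the result.

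To build such a witness, fix any $s < \tfrac{1}{2}\diam(Z)$. Since $\diam(Z) > 2s$, the supremum definition of diameter yields points $y,z \in Z$ with $\rho(y,z) > 2s$. Set $x := y$ and $r := \rho(y,z)$. Then $z \in B(x,r)$ (as the ball is closed) and $\{y,z\} \subset B(x,r)$ is $2s$-separated, hence in particular $s$-separated, so $N(B(x,r),s) \ge 2$.

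It remains to check that $e^{\a r} < 2$. The hypothesis $\tfrac{1}{\a} \ge 2\diam(Z)$ is equivalent to $\a\,\diam(Z) \le \tfrac{1}{2}$, and $r \le \diam(Z)$, so $\a r \le \tfrac{1}{2}$ and therefore $e^{\a r} \le e^{1/2} < 2 \le N(B(x,r),s)$. This violates the defining inequality of $s(Z,\a)$ for this particular $x$ and $r$, so $s < s(Z,\a)$. Letting $s \uparrow \tfrac{1}{2}\diam(Z)$ yields $s(Z,\a) \ge \tfrac{1}{2}\diam(Z)$.

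There is no real obstacle here; the one place where the numerical constants must line up is the final inequality $e^{1/2} < 2$, which is precisely what allows the factor $2$ in the hypothesis $\tfrac{1}{\a} \ge 2\diam(Z)$ to be sufficient. If one weakens the hypothesis, the same argument still works but produces a proportionally smaller lower bound on $s(Z,\a)$.
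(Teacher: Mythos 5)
Your proof is correct and is essentially the paper's argument in direct (contrapositive) form: both reduce to exhibiting a two-point, roughly $\tfrac{1}{2}\diam(Z)$-separated packing inside a ball of radius at most $\diam(Z)$ and noting $e^{\a r}\le e^{1/2}<2$, the paper doing this by contradiction with $r=\diam(Z)$ and you doing it directly with $r=\rho(y,z)$. The only nitpick is that a failing witness gives $s\le s(Z,\a)$ rather than strict inequality, but this does not affect the limit $s\uparrow\tfrac{1}{2}\diam(Z)$.
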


\begin{proof}
    Assume that the conclusion is false. By the definition of entropic scale, this means that 
    $$
    N \left( B(x,r), \frac{1}{2}\diam(Z) \right)  \le e^{\a r}
    \quad \text{for any $x \in Z$ and $r>0$}.
    $$
    Let us use this bound for $r = \diam(Z)$. We trivially have $B(x,r)=Z$, and $\a r \le 1/2$ holds by assumption. It follows that
    $$
    N \left( Z, r/2 \right) 
    \le e^{1/2}
    < 2. 
    $$
    This means that the distance between any pair of points in $Z$ must be bounded by $r/2$. (Otherwise a pair of $r/2$-separated points $x,y$ would make the packing number at least $2$.) But this shows that 
    $\diam(Z) \le r/2 = \diam(Z)/2$, a contradiction.
\end{proof}

\begin{lemma}    \label{lem: 1/a small}
    Let $(Z,\rho)$ be a connected metric space and $\a>0$. Then 
    $$
    \frac{1}{\a} < 2\diam(Z) 
    \quad \text{implies} \quad
    s(Z,\a) \ge \frac{1}{2\a}.
    $$
\end{lemma}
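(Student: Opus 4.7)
My plan is to mirror the structure of Lemma~\ref{lem: 1/a large}: assume the conclusion fails and derive a contradiction from the definition of entropic scale. So suppose $s(Z,\a) < 1/(2\a)$. Then there exists $s < 1/(2\a)$ such that $N(B(x,r),s) \le e^{\a r}$ for every $x\in Z$ and every $r>0$. I want to locate a ball $B(x,r)$ that contains at least two $s$-separated points while still satisfying $e^{\a r}<2$, which will directly contradict the inequality above.

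The first ingredient is the hypothesis $1/\a < 2\diam(Z)$, i.e.\ $\diam(Z) > 1/(2\a) > s$. This gives points $x,y \in Z$ with $\rho(x,y) > 1/(2\a)$. The second ingredient (not used in Lemma~\ref{lem: 1/a large}) is connectedness: the map $z \mapsto \rho(x,z)$ is continuous on the connected space $Z$, so its image is a connected subset of $\R$ containing $0$ and $\rho(x,y)$, hence it contains the whole interval $[0,\rho(x,y)]$. Consequently, for any $r \in (0,\rho(x,y)]$ there exists $z_r \in Z$ with $\rho(x,z_r)=r$.

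Now I would choose $r$ in the open interval $\bigl(s,\min(\rho(x,y),\,\ln 2/\a)\bigr)$. This interval is nonempty: its upper endpoint exceeds $s$ because $s < 1/(2\a) < \ln 2/\a$ and $\rho(x,y) > 1/(2\a) > s$. For such $r$, the point $z_r$ lies in the closed ball $B(x,r)$, and $\rho(x,z_r)=r>s$, so $\{x,z_r\}$ is $s$-separated, giving $N(B(x,r),s) \ge 2$. On the other hand, $r < \ln 2/\a$ forces $e^{\a r}<2$, contradicting $N(B(x,r),s) \le e^{\a r}$. Therefore $s(Z,\a) \ge 1/(2\a)$.

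The proof is essentially mechanical once the idea is in place; I do not expect any real obstacle. The only subtle point is using connectedness in the right form, namely the intermediate value property of $\rho(x,\cdot)$, which is exactly what lets us find a point at a prescribed small distance $r$ from $x$ even though the only ``large distance'' information we start with is $\rho(x,y) > 1/(2\a)$. Without this, one could only exhibit a pair of points at some uncontrolled distance, and the ball radius $r$ would not be small enough to make $e^{\a r}<2$.
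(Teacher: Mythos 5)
Your proof is correct, and it uses connectedness in a genuinely different way from the paper. You invoke connectedness at the outset, through the intermediate value property of the continuous map $z \mapsto \rho(x,z)$: its image is an interval containing $0$ and $\rho(x,y) > \frac{1}{2\a}$, so you can manufacture a point $z_r$ at any prescribed distance $r \in \bigl(s, \min(\rho(x,y), \tfrac{\ln 2}{\a})\bigr)$, and the two-point set $\{x,z_r\}$ then directly violates $N(B(x,r),s) \le e^{\a r} < 2$. The paper runs the contradiction in the opposite direction: assuming $s(Z,\a) < \frac{1}{2\a}$, it plugs $r = \frac{2}{3\a}$ into the packing bound to get $N\bigl(B(x,\tfrac{2}{3\a}),\tfrac{1}{2\a}\bigr) < 2$, concludes that $B(x,\tfrac{2}{3\a}) = B(x,\tfrac{1}{2\a})$, and hence that a ball of intermediate radius is a nonempty proper clopen subset of $Z$, contradicting connectedness. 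Both arguments give the same constant $\frac{1}{2\a}$; yours is somewhat more direct (no clopen-set detour, and it works for any $s < \frac{1}{2\a}$ without needing to fix specific radii like $\frac{2}{3\a}$), while the paper's formulation makes transparent exactly what structure a disconnected counterexample would have, namely a gap in the set of distances from $x$ — which is precisely the phenomenon your intermediate-value step rules out.
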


\begin{proof}
    By assumption, there exists a pair of points $x,y \in Z$ satisfying 
    \begin{equation}    \label{eq: x y separated}
        \rho(x,y) > \frac{1+\e}{2\a}
        \quad \text{for some } \e>0.
    \end{equation}
    Assume that the conclusion of the lemma is false. By definition of entropic scale, this means that 
    $$
    N \left( B(x,r), \frac{1}{2\a} \right)  \le e^{\a r}
    \quad \text{for any } r>0.
    $$
    Plug $r = 2/(3\a)$ to get
    $$
    N \left( B \Big(x,\frac{2}{3\a}\Big), \frac{1}{2\a} \right) 
    \le e^{2/3}
    < 2. 
    $$
    This means that every point $y$ in the ball $B(x,2/(3\a))$ must satisfy $\rho(x,y) \le 1/(2\a)$. (Otherwise a pair of $1/(2\a)$-separated points $x,y$ would make the packing number at least $2$.) Thus we showed that 
    $$
    B \Big(x,\frac{2}{3\a}\Big)
    = B \Big(x,\frac{1}{2\a}\Big).
    $$
    This means that any ball of the intermediate radius, and in particular the ball $B \coloneqq B(x,\frac{1+\e}{2\a})$, must be closed and open. Moreover, \eqref{eq: x y separated} shows that $y \not\in B$, so $B$ is a proper subset of $Z$. This contradicts the connectedness of $Z$. 
\end{proof}

\subsection{Doubling scale}\label{s: doubling scale}

Fixing $r=2s$ in the definition of entropic scale (Definition~\ref{def: entropic scale}) leads to:

\begin{definition}[Doubling scale]
    The {\em doubling scale} of a metric space $(Z,\rho)$ is defined as follows:
    $$
    \ubar{s}(Z,\a)
    = \inf\left\{s>0 \;\Big\vert\; N\left(B(x,2s),s\right)\leq e^{2\alpha s} \; \forall x\in Z \right\}.
    $$
\end{definition}

\begin{proposition}[Entopic vs doubling scale] \label{prop: doubling}
    For any norm-convex metric space $(Z,\rho)$ and any $\a>0$, we have
    $$
    \frac{1}{4} s(Z,\a) 
    \le \ubar{s}(Z,\a) 
    \le s(Z,\a).
    $$
\end{proposition}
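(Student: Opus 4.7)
The bound $\ubar{s}(Z,\alpha)\le s(Z,\alpha)$ is immediate from the two definitions: any $s$ that satisfies $N(B(x,r),s)\le e^{\alpha r}$ at every radius $r>0$ in particular satisfies it at the single radius $r=2s$, which is precisely the defining inequality of $\ubar{s}$. Hence every $s$ witnessing the entropic scale also witnesses the doubling scale.

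For the lower bound $s(Z,\alpha)\le 4\ubar{s}(Z,\alpha)$, my plan is to fix an arbitrary $\sigma>\ubar{s}(Z,\alpha)$, so that the doubling inequality $N(B(x,2\sigma),\sigma)\le e^{2\alpha\sigma}$ holds for every $x\in Z$, and then promote this single-scale bound to the full entropic condition at the slightly enlarged packing scale $4\sigma$: namely, show $N(B(x,r),4\sigma)\le e^{\alpha r}$ for every $x\in Z$ and every $r>0$. Taking the infimum in $\sigma$ then delivers $s(Z,\alpha)\le 4\ubar{s}(Z,\alpha)$. I split into two regimes. When $r\le 2\sigma$, the diameter of $B(x,r)$ is at most $2r\le 4\sigma$, so no pair of points in $B(x,r)$ can be $4\sigma$-separated and $N(B(x,r),4\sigma)\le 1\le e^{\alpha r}$ holds trivially. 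When $r>2\sigma$, choose $k_0=\lceil \log_2(r/(2\sigma))\rceil\ge 1$, so that $r\le 2\sigma\cdot 2^{k_0}$ and $r>\sigma\cdot 2^{k_0}$. The chain rule (Fact~\ref{fact: chaining}) at base scale $2\sigma$ gives
\[
N\bigl(B(x,2\sigma\cdot 2^{k_0}),\,2\sigma\bigr)\;\le\;\prod_{k=1}^{k_0}\sup_{x_k\in Z} N\bigl(B(x_k,2\sigma\cdot 2^k),\,2\sigma\cdot 2^{k-1}\bigr),
\]
and by monotonicity (Lemma~\ref{lem: monotonicity}) with $t=2^{-k}$ each factor rescales to $N(B(x_k,2\sigma),\sigma)\le e^{2\alpha\sigma}$ via the doubling hypothesis. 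This yields $N(B(x,r),4\sigma)\le N(B(x,r),2\sigma)\le e^{2\alpha\sigma k_0}$, and since $2^{k_0}\ge 2k_0$ for every $k_0\ge 1$, the lower bound $r>\sigma\cdot 2^{k_0}\ge 2\sigma k_0$ gives $e^{2\alpha\sigma k_0}\le e^{\alpha r}$, as required.

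The only mild subtlety in this plan is the choice of final packing scale $4\sigma$ rather than $2\sigma$: this factor-of-two slack is exactly what is needed for the elementary diameter argument to cover the small range $r\le 2\sigma$, where the chain rule at base $2\sigma$ degenerates and yields no useful bound. That slack is responsible for the constant $4$ in the statement; everything else is standard dyadic chaining.
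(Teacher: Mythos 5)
Your proof is correct, and it uses the same two engines as the paper's argument---the dyadic chain rule (Fact~\ref{fact: chaining}) and the norm-convexity monotonicity Lemma~\ref{lem: monotonicity}---but it routes the factor of $4$ differently. The paper fixes the packing scale at $s=\ubar{s}(Z,\a)$ and degrades the exponent, proving $N(B(x,r),s)\le e^{4\a r}$ for all $r$ (via three cases $r<s/2$, $r\in[s/2,2s)$, $r\ge 2s$), which gives $s(Z,4\a)\le s$ and then requires Proposition~\ref{prop: regularity} to convert this into $s(Z,\a)\le 4 s(Z,4\a)\le 4s$. You instead keep the exponent $\a$ fixed and dilate the packing scale to $4\sigma$, absorbing the small-$r$ regime by the elementary diameter observation and the large-$r$ regime by chaining at base scale $2\sigma$ with the counting inequality $2^{k_0}\ge 2k_0$; this lands directly on $s(Z,\a)\le 4\sigma$ and makes the proof self-contained, bypassing the regularity proposition (whose proof, to be fair, rests on the same monotonicity lemma, so the overall reliance on norm-convexity is identical). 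One small point of hygiene: your opening claim that the doubling inequality $N(B(x,2\sigma),\sigma)\le e^{2\a\sigma}$ holds for \emph{every} $\sigma>\ubar{s}(Z,\a)$ is not purely definitional, since the defining condition is not obviously monotone in $\sigma$; either invoke Lemma~\ref{lem: monotonicity} with $t=s'/\sigma$ for a witness $s'<\sigma$ from the infimum, or simply run your whole argument at such a witness $s'$ and conclude $s(Z,\a)\le 4s'\le 4\sigma$ before taking the infimum. (The paper glosses over the analogous point by writing $s=\ubar{s}(Z,\a)$ as if the infimum were attained, so this is a shared, easily repaired blemish rather than a gap.)
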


\begin{proof}
    The upper bound is trivial. 
    To prove the lower bound, let $s=\ubar{s}(Z,\a)$. Fix any $x \in Z$, and let us bound the packing number
    $$
    N(r) \coloneqq N\left(B(x,r),s\right)
    $$
    for each $r>0$. Consider three cases. 

    1. If $r \in (0,s/2)$, then $N(r)=1$.

    2. If $r \in [s/2,2s)$, then the definition of $s$ gives
    $$
    N(r) \le N(B(x,2s),s) \le e^{2\a s} \le e^{4\a r}.
    $$

    3. If $r \ge 2s$, then $r \in [s2^{k_0-1},s2^{k_0})$ for some $k_0 \in \{2,3,\ldots\}$. Then by the chain rule (Fact~\ref{fact: chaining}) we have 
    $$
    N(r) \le \prod_{k=1}^{k_0} \sup_{x_k \in Z} N\left(B(x_k,s2^k),s2^{k-1}\right).
    $$
    Using Lemma~\ref{lem: monotonicity} and the definition of $s$, we see that each factor in this product is bounded by $N(B(x,2s),s) \le e^{2\a s}$. Thus
    $$
    N(r) \le e^{2\a s k_0} \le e^{2\a r},
    $$
    where in the last step we used that $k_0 \le 2^{k_0-1}$.

    Combining all three cases, we conclude that 
    $N(r) \le e^{4\a r}$ for any $r>0$. By the definition of entropic scale, this implies $s(Z,4\a) \le s$. By Proposition~\ref{prop: regularity}, we get
    $$
    s(Z,\a) \le 4s(Z,4\a) \le 4s. 
    $$
    In view of the definition of $s$, this completes the proof. 
\end{proof}

\subsection{Outer scale}\label{s: outer scale}

\begin{definition}[Outer scale]
    The {\em outer scale} of a metric space $(Z,\rho)$ is defined as follows:
    $$
    \bar{s}(Z,\a)
    = \inf_{\gamma>0} \left\{\gamma + \frac{1}{\a} \ln N(Z,\gamma)\right\}.
    $$
\end{definition}

\begin{proposition}\label{prop: s<sbar}
    For any metric space $(Z,\rho)$ and any $\a>0$, we have
    $$
    s(Z,\a) \le 2\bar{s}(Z,\a).
    $$
\end{proposition}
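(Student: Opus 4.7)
The plan is to verify the defining inequality of the entropic scale directly, rather than iterate a chain rule as in Proposition~\ref{prop: doubling}. I would fix an arbitrary $\gamma > 0$ and set
$$
s = 2\left( \gamma + \frac{1}{\a} \ln N(Z, \gamma) \right),
$$
then show that this $s$ is admissible in Definition~\ref{def: entropic scale}, i.e., that $N(B(x,r),s) \le e^{\a r}$ for every $x \in Z$ and every $r > 0$. Taking the infimum over $\gamma > 0$ at the end would give $s(Z,\a) \le 2\bar{s}(Z,\a)$.

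To verify admissibility, I would split on whether $r \le s/2$ or $r > s/2$. In the first case, $B(x,r)$ has diameter at most $2r \le s$, so no two of its points can be strictly more than $s$ apart, forcing $N(B(x,r),s) = 1 \le e^{\a r}$ trivially. In the second case, since $s \ge 2\gamma \ge \gamma$, monotonicity of packing numbers in both arguments gives
$$
N(B(x,r),s) \le N(Z,s) \le N(Z,\gamma).
$$
The choice of $s$ makes $s/2 \ge \tfrac{1}{\a}\ln N(Z,\gamma)$, so $r > s/2$ yields $\a r > \ln N(Z,\gamma)$ and thus $N(Z,\gamma) < e^{\a r}$, completing the verification.

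I do not expect a genuine obstacle here; the argument is essentially bookkeeping. The factor of $2$ in the statement traces cleanly to the case split: $s$ must simultaneously dominate $2\gamma$ (to make $B(x,r)$ trivial for small $r$ via its diameter) and $\tfrac{2}{\a}\ln N(Z,\gamma)$ (to make the global packing number small enough relative to $e^{\a r}$ for large $r$), and taking $s$ to be the sum of these contributions is the cleanest way to meet both demands at once. The two monotonicity facts used---that $N(\cdot,s)$ is decreasing in $s$ and that $N(A,s) \le N(B,s)$ for $A \subset B$---are immediate from the definition of packing number and need no separate justification.
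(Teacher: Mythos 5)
Your argument is correct and is essentially the paper's own proof: the same case split on small versus large $r$ (ball of small radius has diameter at most the scale, hence packing number $1$; otherwise bound by the global packing number $N(Z,\gamma)$ and absorb it into $e^{\a r}$), with only a cosmetic relabeling — the paper sets $s=\gamma+\frac{1}{\a}\ln N(Z,\gamma)$ and works at scale $2s$, while you fold the factor $2$ into $s$ from the start. No gap; nothing further needed.
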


\begin{proof}
    Fix any $\gamma>0$ and let $s \coloneqq \gamma + \frac{1}{\a}\ln N(Z,\gamma)$. Fix any $x \in Z$, and let us bound the packing number
    $$
    N(r) \coloneqq N(B(x,r),2s)
    $$
    for each $r>0$. Consider two cases. 

    1. If $r<s$, then $N(r)=1$.

    2. If $r \ge s$, then, using that $2s \ge \gamma$ and by definition of $s$ we have  
    $$
    N(r) \le N(Z,\gamma) \le e^{\a s} \le e^{\a r}.
    $$

    Thus $N(r) \le e^{\a r}$ for any $r>0$. By the definition of entropic scale, this implies $s(Z,\a) \le 2s$. In view of the definition of $s$, this completes the proof.
\end{proof}

For norm-convex spaces, the bound in Proposition~\ref{prop: s<sbar} can be reversed up to a logarithmic factor:

\begin{proposition}\label{prop: sbar<slog}
    For any norm-convex metric space $(Z,\rho)$ and any $\a>0$, we have
    $$
    \bar{s}(Z,\a) \le 3s(Z,\alpha) \left[1 + \ln \frac{\diam(Z)}{s(Z,\a)}\right].
    $$
\end{proposition}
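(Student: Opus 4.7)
The plan is to estimate $\bar{s}(Z,\a)$ by choosing $\gamma = s := s(Z,\a)$ in its defining infimum and then bounding $N(Z,s)$ via the multi-scale chain rule. Before doing so, I would first note that $s \le \diam(Z)$: for any $s' \ge \diam(Z)$ every packing of $Z$ with separation strictly greater than $s'$ is a singleton, so the constraint in Definition~\ref{def: entropic scale} is trivially satisfied and thus $s(Z,\a) \le \diam(Z)$. This ensures $\ln(\diam(Z)/s) \ge 0$, which is needed to absorb constants at the very end.

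Next, I would apply Fact~\ref{fact: chaining} with $k_0 := \max\bigl(1, \lceil \log_2(\diam(Z)/s)\rceil\bigr)$, chosen so that $s 2^{k_0} \ge \diam(Z)$ while still $k_0 \ge 1$. Each factor $N(B(x_k, s2^k), s2^{k-1})$ in the resulting product involves packing a ball of radius twice the packing scale, so applying the norm-convex monotonicity (Lemma~\ref{lem: monotonicity}) with $t = 2^{-(k-1)}$ reduces it to $N(B(x_k, 2s), s)$, and the definition of the entropic scale bounds this by $e^{2\a s}$. The chain rule therefore gives
$$
\sup_{x \in Z} N\bigl(B(x, s 2^{k_0}), s\bigr) \le e^{2 \a s k_0}.
$$
Since $Z \subset B(x_0, \diam(Z)) \subset B(x_0, s 2^{k_0})$ for any $x_0 \in Z$ and packing numbers are monotone in the set, this also yields $N(Z, s) \le e^{2 \a s k_0}$.

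Finally, plugging $\gamma = s$ into the definition of $\bar{s}$, I obtain $\bar{s}(Z,\a) \le s + \frac{1}{\a}\ln N(Z,s) \le s + 2 s k_0$. Using $k_0 \le 1 + \log_2(\diam(Z)/s)$ and converting to natural logarithms via $\log_2 = \ln/\ln 2$, this becomes
$$
\bar{s}(Z,\a) \le 3s + \tfrac{2}{\ln 2}\, s \ln\bigl(\diam(Z)/s\bigr) \le 3s\bigl[1 + \ln(\diam(Z)/s)\bigr],
$$
where the last inequality uses $2/\ln 2 < 3$ together with $\ln(\diam(Z)/s) \ge 0$. The main subtleties are purely bookkeeping: choosing $k_0 \ge 1$ so that the chain rule is nontrivial while simultaneously having $s 2^{k_0} \ge \diam(Z)$, and verifying $s \le \diam(Z)$ so that the logarithmic factor is nonnegative. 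Neither presents a genuine obstacle, so the whole argument is driven by the single telescoping bound $N(B(x,2s),s) \le e^{2\a s}$ extracted from the entropic scale.
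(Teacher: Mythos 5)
Your proposal is correct and follows essentially the same route as the paper: take $\gamma = s(Z,\a)$ in the definition of the outer scale, apply the multi-term chain rule (Fact~\ref{fact: chaining}) out to radius $s2^{k_0} \ge \diam(Z)$, reduce each factor to $N(B(x,2s),s) \le e^{2\a s}$ via Lemma~\ref{lem: monotonicity}, and conclude $\bar{s}(Z,\a) \le s + 2sk_0$. Your extra bookkeeping (checking $s \le \diam(Z)$ and forcing $k_0 \ge 1$) only makes explicit details the paper leaves implicit.
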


\begin{proof}
    Let $s \coloneqq s(Z,\a)$ and 
    $$
    k_0 \coloneqq \left\lceil \log_2 \frac{\diam(Z)}{s}\right\rceil.
    $$
    Then $\diam(Z) \le s2^{k_0}$ and thus $Z \subset B(x,s2^{k_0})$ for any $x \in Z$. Then by chain rule (Fact~\ref{fact: chaining}) we have 
    $$
    N(Z,s) \le \prod_{k=1}^{k_0} \sup_{x_k \in Z} N\left(B(x_k,s2^k),s2^{k-1}\right).
    $$
    Using Lemma~\ref{lem: monotonicity} and the definition of $s$, we see that each factor in this product is bounded by $N(B(x,2s),s) \le e^{2\a s}$. Thus
    $$
    N(Z,s) \le e^{2\a s k_0}.
    $$
    It follows that 
    $$
    s+\frac{1}{\a}\ln N(Z,s) 
    \le s+2sk_0
    \le 3s \left[1 + \ln \frac{\diam(Z)}{s} \right].
    $$
    Recall the definitions of $s$ and $\bar{s}(Z,\a)$ to complete the proof.
\end{proof}

\begin{remark}
    The logarithmic factor in Proposition~\ref{prop: sbar<slog} cannot be removed in general. For example, let $Z$ be the unit Euclidean ball in $\R^d$ with the Euclidean metric, and let $\a>2d$. Then, as we will see in Theorem~\ref{thm: unit ball}, $s(Z,\a) \asymp d/\a$ while it is not hard to check that $\bar{s}(Z,\a) \asymp (d/\a) \log(\a/d)$.
\end{remark}

Nevertheless, the logarithmic factor can be removed if the packing numbers satisfy a  {\em doubling condition}:

\begin{proposition}\label{prop: sbar<s}
    Let $(Z,\rho)$ be a norm-convex metric space and let $\a>0$. Set $ s=s(Z,\a)$. Assume that 
    $$
    N(Z,s) \ge N(Z,\kappa s)^2
    \quad \text{for some } \kappa \ge 1.
    $$
    Then 
    $$
    \frac{s}{2} \le \bar{s}(Z,\a) \le 2 \kappa s.
    $$
\end{proposition}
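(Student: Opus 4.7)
The plan is to handle the two inequalities separately; neither seems to pose a real obstacle. The lower bound $s/2\le\bar{s}(Z,\a)$ comes for free from Proposition~\ref{prop: s<sbar}, which already states $s(Z,\a)\le 2\bar{s}(Z,\a)$ in full generality (so in fact norm-convexity is not used for this half).

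For the upper bound $\bar{s}(Z,\a)\le 2\kappa s$, the plan is to exhibit a single value of $\gamma$ achieving it, namely $\gamma=\kappa s$. First I would invoke the chain rule (Fact~\ref{fact: chain rule one term}) applied to $Z$ with the pair of scales $(\kappa s,s)$:
$$
N(Z,s)\;\le\;N(Z,\kappa s)\cdot \sup_{a\in Z} N\!\bigl(B(a,\kappa s),s\bigr).
$$
The definition of the entropic scale $s=s(Z,\a)$ bounds the supremum on the right by $e^{\a\kappa s}$ (take $r=\kappa s$ in Definition~\ref{def: entropic scale}). Combining this with the doubling hypothesis $N(Z,s)\ge N(Z,\kappa s)^2$ yields
$$
N(Z,\kappa s)^2 \;\le\; N(Z,\kappa s)\cdot e^{\a\kappa s},
$$
so $N(Z,\kappa s)\le e^{\a\kappa s}$, i.e.\ $\tfrac{1}{\a}\ln N(Z,\kappa s)\le \kappa s$. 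Plugging $\gamma=\kappa s$ into the definition of the outer scale then gives $\bar{s}(Z,\a)\le \kappa s+\kappa s=2\kappa s$ at once.

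The whole argument is essentially a one-line consequence of the chain rule once the doubling hypothesis is plugged in, and I do not anticipate any genuine obstacle. The only place where one must be a little careful is the legitimacy of the packing bound $\sup_{a\in Z}N(B(a,\kappa s),s)\le e^{\a\kappa s}$ at the critical value $s=s(Z,\a)$ rather than at $s'>s$; this is handled in the same way as in the proofs of Propositions~\ref{prop: doubling} and~\ref{prop: sbar<slog}, which silently use the same convention.
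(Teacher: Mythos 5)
Your proof is correct and is essentially the paper's argument: the lower bound is quoted from Proposition~\ref{prop: s<sbar}, and the upper bound combines the chain rule (Fact~\ref{fact: chain rule one term}) at scale $\gamma=\kappa s$ with the doubling hypothesis to get $\tfrac{1}{\a}\ln N(Z,\kappa s)\le\kappa s$, then plugs $\gamma=\kappa s$ into the definition of $\bar{s}$. The only cosmetic difference is that the paper routes the same inequalities through Lemma~\ref{lem: pack by sbar} and a rearrangement involving $\bar{s}(Z,\a)$ on both sides, whereas you cancel a factor of $N(Z,\kappa s)$ directly; your remark about using the packing bound at the critical value $s=s(Z,\a)$ matches the paper's own (silent) convention.
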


For the proof, we need: 

\begin{lemma}\label{lem: pack by sbar}
    Let $(Z,\rho)$ be a norm-convex metric space and let $\a>0$. Set $ s=s(Z,\a)$. Then
    $$
    \frac{1}{\a} \ln N(Z,s) \le \bar{s}(Z,\a).
    $$
\end{lemma}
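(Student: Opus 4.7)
The plan is to unfold the definition of $\bar{s}(Z,\a)$ and reduce the claim to showing that, for every $\gamma > 0$,
$$
\frac{1}{\a}\ln N(Z,s) \le \gamma + \frac{1}{\a}\ln N(Z,\gamma),
$$
where $s \coloneqq s(Z,\a)$; taking the infimum over $\gamma > 0$ on the right then yields the lemma. Exponentiating and simplifying, this is equivalent to the multiplicative bound $N(Z,s) \le N(Z,\gamma)\, e^{\a\gamma}$, so the task reduces to establishing this single inequality for an arbitrary $\gamma > 0$.

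To obtain the multiplicative bound, I would combine the chain rule with the defining property of the entropic scale. Applying Fact~\ref{fact: chain rule one term} to $\A = Z$ at scales $\gamma$ and $s$ produces
$$
N(Z,s) \le N(Z,\gamma)\cdot \sup_{x\in Z} N\bigl(B(x,\gamma),\,s\bigr).
$$
Since $s = s(Z,\a)$, Definition~\ref{def: entropic scale} (applied with $r = \gamma$) guarantees that each factor satisfies $N(B(x,\gamma),s) \le e^{\a\gamma}$, and substituting gives $N(Z,s) \le N(Z,\gamma)\,e^{\a\gamma}$, as required.

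I do not expect any substantive obstacle: the proof is essentially a single application of the chain rule followed by invoking the definition of $s(Z,\a)$, with no case analysis needed since the bound on $N(B(x,\gamma),s)$ in Definition~\ref{def: entropic scale} holds for every $\gamma > 0$. In particular, the argument does not appear to use the norm-convexity hypothesis on $Z$; that assumption is presumably carried over for uniformity with the surrounding results in this subsection.
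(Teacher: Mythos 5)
Your proposal is correct and matches the paper's proof essentially verbatim: fix $\gamma>0$, apply the one-term chain rule (Fact~\ref{fact: chain rule one term}) together with the definition of the entropic scale at $r=\gamma$ to get $N(Z,s)\le N(Z,\gamma)e^{\a\gamma}$, rearrange, and take the infimum over $\gamma$. Your observation that norm-convexity is not actually used is also consistent with the paper's argument, which invokes only the chain rule and Definition~\ref{def: entropic scale}.
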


\begin{proof}
    Fix any $\gamma>0$ and use the chain rule (Fact~\ref{fact: chain rule one term})  and the definition of $s$ to get
    $$
    N(Z,s) 
    \le N(Z,\gamma) \cdot \sup_{z \in Z} N\left( B(z,\gamma),s\right)
    \le N(Z,\gamma) \cdot e^{\a\gamma}.
    $$
    Rearranging the terms yields
    $$
    \frac{1}{\a} \ln N(Z,s) 
    \le \gamma + \frac{1}{\a} \ln N(Z,\gamma).
    $$
    Take the infimum over $\gamma>0$ to complete the proof. 
\end{proof}

\begin{proof}[Proof of Proposition~\ref{prop: sbar<s}]
    The lower bound was proved in Proposition~\ref{prop: s<sbar}, so it is enough to prove the upper bound. Using first the doubling condition, then Lemma~\ref{lem: pack by sbar}, and finally the definition of the outer scale, we get
    $$
    \frac{2}{\a} \ln N(Z,\kappa s)
    \le \frac{1}{\a} \ln N(Z,s)
    \le \bar{s}(Z,\a)
    \le \kappa s + \frac{1}{\a} \ln N(Z,\kappa s).
    $$
    Rearranging the terms, we get
    $$
    \kappa s + \frac{1}{\a} \ln N(Z,\kappa s) \le 2 \kappa s.
    $$
    By the definition of the outer scale, this yields $\bar{s}(Z,\a) \le 2\kappa s$.
    The proposition is proved.
\end{proof}

\section{Lower bounds on accuracy}\label{s: lower bounds}

\begin{theorem} \label{thm: A>s}
    For any bimetric space $(Z,\rho_1,\rho_2)$ and any $\alpha>0$, we have
    $$
    A(Z,\alpha) \ge \frac{1}{8} s(Z,2\alpha).
    $$
\end{theorem}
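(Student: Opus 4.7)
I will adapt the classical packing-style lower bound for differentially private mechanisms to the metric-privacy setting. Fix any $s<s(Z,2\a)$. By the definition of the entropic scale, there exist $x\in Z$, $r>0$ and points $y_1,\ldots,y_N\in B_1(x,r)$ which are pairwise $\rho_2$-separated by more than $s$, with $N>e^{2\a r}$, and in particular $N\ge 2$. Let $\MM$ be any $(\a,\rho_1)$-metrically private mechanism, and write $\mu_i,\mu_x$ for the laws of $\MM(y_i),\MM(x)$. The key observation will be to anchor the privacy argument at the center $x$ rather than between packing points: since $\rho_1(x,y_i)\le r$, metric privacy gives $\mu_i\ge e^{-\a r}\mu_x$ for every $i$, saving a factor of two in the exponent that precisely matches the target $s(Z,2\a)$.

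I then split on whether $\a r\le\ln 2$ or $\a r>\ln 2$. In the small-$\a r$ regime I will use an expectation-averaging bound. Integrating $\mu_i\ge e^{-\a r}\mu_x$ against $\rho_2(\cdot,y_i)$ and summing over $i$, and noting that any $z$ lies within $s/2$ of at most one of the $s$-separated $y_i$'s (so $\sum_i\rho_2(z,y_i)\ge (N-1)s/2$), I obtain
\[
\max_i\E\,\rho_2(\MM(y_i),y_i)\;\ge\;\frac{1}{N}\sum_{i=1}^N\E\,\rho_2(\MM(y_i),y_i)\;\ge\;\frac{e^{-\a r}(N-1)s}{2N}\;\ge\;\frac{s}{8},
\]
where the last step uses $e^{-\a r}\ge 1/2$ and $1-1/N\ge 1/2$.

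In the large-$\a r$ regime I will argue by contradiction: suppose $\sup_{y\in Z}\E\,\rho_2(\MM(y),y)<s/8$. Markov's inequality then yields $\mu_i(B_2(y_i,s/4))>1/2$ for every $i$, and the balls $B_2(y_i,s/4)$ are pairwise disjoint because the $y_i$ are $\rho_2$-separated by more than $s$. The anchor bound gives $\mu_x(B_2(y_i,s/4))\ge e^{-\a r}\mu_i(B_2(y_i,s/4))>e^{-\a r}/2$, and summing over $i$ (using disjointness) forces $Ne^{-\a r}/2<1$, i.e., $N<2e^{\a r}$. Combined with $N>e^{2\a r}$ this gives $e^{\a r}<2$, contradicting $\a r>\ln 2$. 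Either way, every $\MM$ satisfies $\sup_{y\in Z}\E\,\rho_2(\MM(y),y)\ge s/8$, so $A(Z,\a)\ge s/8$; letting $s\uparrow s(Z,2\a)$ will complete the proof.

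The main technical subtlety is that the entropic scale only guarantees the bare inequality $N>e^{2\a r}$, which has too little slack for a single Markov-plus-packing contradiction to close in all regimes: it breaks when $\a r$ is small and $e^{2\a r}\approx 1$. The expectation-averaging estimate handles exactly that small-$\a r$ regime, and the two pieces meet cleanly at $\a r=\ln 2$ to produce the factor $1/8$.
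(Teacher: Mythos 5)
Your proposal is correct and follows essentially the same route as the paper: pack $\rho_2$-separated points inside a $\rho_1$-ball $B_1(x,r)$, anchor the privacy inequality at the center $x$ (which is exactly what produces the factor $2\alpha$ rather than $\alpha$), and combine Markov's inequality with disjointness of the small $\rho_2$-balls to bound the packing number by $e^{2\alpha r}$. The only real difference is the small-$\alpha r$ regime, which you treat by a separate expectation-averaging case, whereas the paper absorbs it into the single Markov argument via integrality of the packing number (the step $\left\lfloor \tfrac{4}{3} e^{\alpha r} \right\rfloor \le e^{2\alpha r}$); both handlings are valid.
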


\begin{proof}
    Fix a point $x \in Z$ and numbers $r,s>0$.
    By the definition of packing numbers, there exists a subset $S(x,r) \subset B_1(x,r)$ that is $(8s)$-separated in the metric $\rho_2$ and has cardinality
    $$
    \abs{S(x,r)} = N_2\left( B_1(x,r),8s \right).
    $$
    The separation condition implies by triangle inequality that all balls $B_2(y,4s)$ centered at points $y \in S(x,r)$ are disjoint. Let $\MM:Z\to Z$ be a $(\alpha,\rho_1)$-metrically private randomized algorithm. We have
    $$
    1 = \Pr{\MM(x) \in Z}
    \ge \sum_{y \in S(x,r)} \Pr{\MM(x) \in B_2(y,4s)}.
    $$
    Since the mechanism $\MM$ is $(\alpha,\rho_1)$-metrically private, we have
    $$
    \Pr{ \MM(x) \in B_2(y,4s)}
    \ge e^{-\a \rho_1(x,y)} \cdot \Pr{\MM(y) \in B_2(y,4s)}.
    $$
    Now, for $y \in S(x,r) \subset B_1(x,r)$, we have 
    $e^{-\a \rho_1(x,y)} \ge e^{-\a r}$. Furthermore, Markov's inequality yields
    $$
    \Pr{\MM(y) \in B_2(y,4s)} 
    = \Pr{\rho_2(\MM(y),y) \le 4s}
    \ge \frac{3}{4}
    $$
    whenever $s \ge \E \rho_2(\MM(y),y)$. 
    Combining these bounds, we conclude the following. If 
    $$
    s = \sup_{y \in Z} \E \rho_2(\MM(y),y),
    $$
    then 
    $$
    1 \ge \abs{S(x,r)} \cdot e^{-\a r} \cdot \frac{3}{4}.
    $$
    Rearranging the terms, we get
    $$
    N_2\left( B_1(x,r),8s \right) 
    = \abs{S(x,r)}
    \le \left\lfloor \frac{4}{3} e^{\a r} \right\rfloor \le e^{2\a r}.
    $$
    By the definition of entropic scale, this yields
    $$
    s(Z,2\a) \le 8s = 8\sup_{y \in Z} \E \rho_2(\MM(y),y).
    $$
    Take the infimum over all $(\alpha,\rho_1)$-metrically private mechanisms $\MM$ on both sides to complete the proof.
\end{proof}

\begin{theorem}\label{thm: A>scirc}
    For any bimetric space  $(Z,\rho_1,\rho_2)$ and any $\a>0$, we have
    $$
    A(Z,\a) \ge \frac{1}{5} s_\circ(Z,2\a).
    $$
\end{theorem}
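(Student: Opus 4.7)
The plan is to leverage the worst-case expected accuracy $A := \sup_{y \in Z} \E\, \rho_2(\MM(y),y)$ of an arbitrary $(\a,\rho_1)$-metrically private mechanism $\MM$ to control the $\rho_2$-diameter of each $\rho_1$-ball. The main idea is: rather than separating balls around a packing (as in Theorem~\ref{thm: A>s}), pick two points $y,y' \in B_1(x,r)$ and exhibit a single random output $\MM(y)$ that with positive probability lies $\rho_2$-close to \emph{both} $y$ and $y'$; a triangle inequality will then bound $\rho_2(y,y')$ directly.

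Concretely, for a parameter $c>0$ to be chosen, I would consider the closed $\rho_2$-balls $B_2(y,cA)$ and $B_2(y',cA)$. Markov's inequality gives $\Pr{\MM(y) \in B_2(y,cA)} \ge 1 - 1/c$. Because $\rho_1(y,y') \le 2r$, combining metric privacy for the event $\{\MM(\cdot) \in B_2(y',cA)\}$ with Markov applied to $\MM(y')$ yields
$$
\Pr{\MM(y) \in B_2(y',cA)} \;\ge\; e^{-2\a r}\, \Pr{\MM(y') \in B_2(y',cA)} \;\ge\; e^{-2\a r}\left(1 - \tfrac{1}{c}\right).
$$
The union bound then produces
$$
\Pr{\MM(y) \in B_2(y,cA) \cap B_2(y',cA)} \;\ge\; \left(1-\tfrac{1}{c}\right)(1+e^{-2\a r}) - 1,
$$
which is strictly positive as soon as $c > 1 + e^{2\a r}$. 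In that regime the intersection is nonempty, and the triangle inequality applied to any $z$ in it gives $\rho_2(y,y') \le 2cA$.

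Choosing $c = \tfrac{5}{2}\,e^{2\a r}$ (which satisfies $c > 1+e^{2\a r}$ for every $r \ge 0$, since $e^{2\a r} \ge 1 > \tfrac{2}{3}$) yields $\rho_2(y,y') \le 5\,A\,e^{2\a r}$. Taking the supremum over $y,y' \in B_1(x,r)$ gives $\diam_2(B_1(x,r)) \le 5\,A\,e^{2\a r}$ for every $x \in Z$ and every $r > 0$, so $s_\circ(Z,2\a) \le 5A$ by Definition~\ref{def: diametric scale}. Passing to the infimum over all $(\a,\rho_1)$-metrically private $\MM$ finishes the proof.

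The most delicate point is the calibration of $c$: it must be large enough to counteract the exponential loss $e^{-2\a r}$ incurred when transporting a Markov estimate across a $\rho_1$-distance of $2r$ via metric privacy (forcing $c > 1+e^{2\a r}$), yet small enough that the resulting bound $2cA$ does not exceed $5\,A\,e^{2\a r}$; scaling $c$ proportionally to $e^{2\a r}$ reconciles both constraints and yields the stated constant $\tfrac{1}{5}$. Structurally, the argument mirrors the proof of Theorem~\ref{thm: A>s}, with the packing-number counting replaced by this two-point diameter estimate.
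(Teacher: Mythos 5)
Your proof is correct and follows essentially the same route as the paper: transfer a Markov tail bound across metric privacy (losing a factor $e^{2\a r}$), intersect it with the direct Markov bound for the same output $\MM(y)$, and use the positive-probability event plus the triangle inequality to bound $\rho_2(y,y')$ by $5e^{2\a r}$ times the worst-case expected error, which gives $s_\circ(Z,2\a)\le 5\,A(Z,\a)$. The only cosmetic difference is your symmetric radius $c\,A$ with $c=\tfrac52 e^{2\a r}$ versus the paper's asymmetric thresholds $2e^{\a r}s$ and $3s$; both calibrations yield the constant $\tfrac15$.
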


\begin{proof}
    Fix any $(\a,\rho_{1})$-metrically private mechanism $\MM$ and any $r>0$. Markov's inequality gives
    $$
    \Pr{\rho_2(\MM(x),x) > 2e^{\a r}s} 
    \le \frac{1}{2e^{\a r}}
    \quad \text{whenever } 
    s \ge \E \rho_2(\MM(x),x).
    $$
    Since $\MM$ is $(\a,\rho_{1})$-metrically private, the probability bound above yields
    $$
    \Pr{\rho_2(\MM(y),x) > 2e^{\a r}s} 
    \le \frac{1}{2}
    \quad \text{where } 
    r=\rho_1(x,y).
    $$
    On the other hand, Markov's inequality gives
    $$
    \Pr{\rho_2(\MM(y),y) > 3s} 
    \le \frac{1}{3}
    \quad \text{whenever } 
    s \ge \E \rho_2(\MM(y),y).
    $$
    Combining the last two probability bounds by triangle inequality, we obtain 
    $$
    \Pr{\rho_2(x,y) > 2e^{\a r}s+3s} \le \frac{1}{2}+\frac{1}{3} < 1.
    $$
    But for fixed $x,y$, the event in the left hand side is deterministic. Thus, the last bound implies that 
    $$
    \rho_2(x,y) 
    \le (2e^{\a r}+3)s \le 5e^{\a r}s.
    $$
    Summarizing, we have shown that
    $$
    \rho_2(x,y) \le 5e^{\a \rho_1(x,y)}s
    \quad \text{ where } s=\sup_{x \in Z} \E \rho_2(\MM(x),x).
    $$
    
    Now consider a ball $B_1(x,r)$ with an arbitrary center $x \in Z$ and arbitrary radius $r>0$. Since the $\rho_1$-distance between any pair of points in this ball is at most $2r$, it follows that 
    $$
    \diam_2(B_1(x,r)) \le 5e^{2\a r}s.
    $$
    Since this holds for all $x \in Z$ and $r>0$, we proved that 
    $$
    s_\circ(Z,2\a) 
    \le 5s 
    = 5 \sup_{x \in Z} \E \rho_2(\MM(x),x).
    $$
    Since this holds for any $(\a,\rho_{1})$-metrically private mechanism $\MM$, the proof is complete. 
\end{proof}

\section{Upper bounds on accuracy: same metric}\label{s: upper bound: same metric}
Throughout this section, we assume that the two metrics $\rho_{1}=\rho_{2}=\rho$ are the same.

\begin{lemma}[Exponential mechanism]    \label{lem: exponential mechanism}
    For any metric space $(Z,\rho)$ and $\a>0$, we have
    $$
    A(Z,\a) 
    \lesssim s(Z,\a/3) + \frac{1}{\a}.
    $$
\end{lemma}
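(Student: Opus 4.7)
The plan is to use the exponential mechanism on an appropriately coarsened net of $Z$. Set $s \coloneqq s(Z,\a/3)$ and $s^* \coloneqq \max(s, 1/\a)$; the coarsening is essential in the regime $s \ll 1/\a$, where otherwise the mechanism would spread its mass too thinly. Fix a maximal $s^*$-separated subset $\NN \subset Z$, which by Fact~\ref{fact: covering} is also an $s^*$-cover. Define the mechanism $\MM: Z \to Z$ to output $y \in \NN$ with
\begin{equation*}
\Pr{\MM(x) = y} = \frac{e^{-(\a/2)\rho(x,y)}}{W(x)}, \qquad W(x) \coloneqq \sum_{z \in \NN} e^{-(\a/2)\rho(x,z)};
\end{equation*}
the finiteness of $W(x)$ will follow from the shell decomposition below.

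Verifying $(\a,\rho)$-metric privacy is the textbook exponential-mechanism calculation: the inequality $\abs{\rho(x,y)-\rho(x',y)} \le \rho(x,x')$ applied termwise yields both $W(x')/W(x) \le e^{(\a/2)\rho(x,x')}$ and the same bound on the numerator ratio for any measurable $S \subset Z$, so the overall ratio $\Pr{\MM(x) \in S}/\Pr{\MM(x') \in S}$ is at most $e^{\a\rho(x,x')}$.

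For the accuracy bound, decompose $\NN$ into shells $A_k \coloneqq \NN \cap \bigl(B(x,(k+1)s^*) \setminus B(x,ks^*)\bigr)$ for $k \ge 0$. Since $\NN$ is $s$-separated (as $s^* \ge s$), the entropic scale yields $\abs{A_k} \le N(B(x,(k+1)s^*),s) \le e^{(\a/3)(k+1)s^*}$, while the cover property produces a point of $\NN$ within $s^*$ of $x$, giving $W(x) \ge e^{-(\a/2)s^*}$. Combining these,
\begin{equation*}
\Pr{\MM(x) \in A_k} \le \frac{\abs{A_k}\, e^{-(\a/2)ks^*}}{W(x)} \le \exp\!\left(\tfrac{\a s^*}{6}(5-k)\right).
\end{equation*}
Pairing this with the trivial bound $\Pr{\MM(x) \in A_k} \le 1$ for $k \le 5$ gives
\begin{equation*}
\E\, \rho(\MM(x), x) \le \sum_{k \ge 0} (k+1)\, s^*\, \Pr{\MM(x) \in A_k} \lesssim s^* + s^* \sum_{k \ge 6} (k+1)\, e^{-(\a s^*/6)(k-5)}.
\end{equation*}

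The main obstacle is controlling this tail. Without the coarsening, the decay rate $\a s/6$ could be arbitrarily small and the tail would behave like $1/(\a^2 s)$, which exceeds $1/\a$ whenever $s \ll 1/\a$. The coarsening is precisely the fix: $s^* \ge 1/\a$ forces $\a s^*/6 \ge 1/6$, so the geometric tail sums to an absolute constant times $s^*$. Hence $\E\, \rho(\MM(x), x) \lesssim s^*$ uniformly in $x \in Z$, and $s^* = \max(s,1/\a) \le s(Z,\a/3) + 1/\a$ delivers the claim.
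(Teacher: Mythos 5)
Your proof is correct and follows essentially the same route as the paper: the exponential mechanism run over a maximal separated net, with a shell decomposition whose cardinalities are controlled by the entropic scale and the normalizing sum lower-bounded via the covering property of the net. The only difference is bookkeeping: the paper nets at scale $s(Z,\alpha/3)$ and uses shells of width $1/\alpha$ (offset by $6s$), whereas you coarsen the net to scale $\max(s,1/\alpha)$ and use shells of that width --- both devices produce the additive $1/\alpha$ and give the same bound.
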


\begin{proof}
    Let 
    \begin{equation}    \label{eq: s alpha/3}
        s \coloneqq s(Z,\a/3).   
    \end{equation}
    Let $\NN$ be a maximal $s$-separated subset of $Z$. Define a randomized map $\MM: Z \to \NN$ as follows. Given an input $x \in Z$, output $y \in \NN$ with probability proportional to $e^{-\a\rho(x,y)/2}$. In other words, for each $x \in Z$, we let 
    $$
    \Pr{\MM(x)=y}
    \coloneqq \frac{1}{\Sigma(x)} e^{-\a\rho(x,y)/2} 
    \quad \text{for all } y \in \NN,
    $$
    where 
    \begin{equation}    \label{eq: Sigma}
        \Sigma(x) = \sum_{y \in \NN} e^{-\a\rho(x,y)/2}.
    \end{equation}
    It is clear that this mechanism $\mathcal{M}$ is $(\alpha,\rho)$-metrically private (this follows analogously to establishing differential privacy of  the usual exponential mechanism, see~\cite[Chapter 3.4]{dwork2014algorithmic}).

    To estimate the accuracy of $\MM$, we need to fix any input $x \in Z$ and bound the expected value of the random variable
    $$
    A \coloneqq \rho(x,\MM(x)).
    $$
    We start by the trivial bound
    \begin{equation}    \label{eq: EA decomposed}
        \E A \le 6s + \E A \one_{\{A>6s\}} 
    \end{equation}
    By definition of $\MM(x)$, we have
    $$
    \E A \one_{\{A>6s\}}
    = \frac{1}{\Sigma(x)} \sum_{y \in \NN:\; \rho(x,y)>6s} \rho(x,y) \, e^{-\a\rho(x,y)/2}.
    $$
    Let us decompose the set of vectors $y$ in this sum according to their distance from $x$ as follows:
    $$
    \left\{y \in \NN:\; \rho(x,y)>6s\right\}
    = \bigsqcup_{k=1}^\infty \NN_k
    $$
    where
    $$
    \NN_k = \left\{y \in \NN:\; 6s+\frac{k-1}{\a} < \rho(x,y) \le 6s+\frac{k}{\a}\right\}.
    $$
    Thus
    $$
    \E A \one_{\{A>6s\}}
    = \frac{1}{\Sigma(x)} \sum_{k=1}^\infty \sum_{y \in \NN_k} \left(6s+\frac{k}{\a}\right) e^{-\a\left(6s+(k-1)/\a\right)/2}.
    $$
    
    Recall that by construction, $\NN_k$ is an $s$-separated subset of the ball $B(x,6s+k/\a)$. Thus, due to our choice of $s$ in \eqref{eq: s alpha/3} and by the definition of entropic scale, we have
    \begin{equation}    \label{eq: Nk}
        \abs{\NN_k} \le e^{(\a/3)(6s+k/\a)}
        = e^{2\a s + k/3}. 
    \end{equation}
    Next, let us find a lower bound on $\Sigma$, a quantity we defined in \eqref{eq: Sigma}. By definition of $\NN$ and Fact~\ref{fact: covering}, there exists $y \in \NN$ that satisfies $\rho(x,y) \le s$. It follows that 
    \begin{equation}    \label{eq: Sigma lower}
        \Sigma(x) \ge e^{-\a \rho(x,y)/2} \ge e^{-\a s/2}.
    \end{equation}
    Then using \eqref{eq: Nk} and \eqref{eq: Sigma lower}, we conclude that
    \begin{align*}
        \E A \one_{\{A>6s\}}
        &\le e^{\a s/2} \sum_{k=1}^\infty e^{2\a s + k/3} \left(6s+\frac{k}{\a}\right) e^{-3\a s-(k-1)/2} \\
        &= e^{(1-\a s)/2} \left( 6s\sum_{k=1}^\infty e^{-k/6} + \frac{1}{\a} \sum_{k=1}^\infty k e^{-k/6} \right)
        \lesssim e^{-\a s/2} \left(s+\frac{1}{\a}\right).
    \end{align*}
    Plugging this bound into \eqref{eq: EA decomposed}, we arrive at
    $$
    \E A \lesssim s + e^{-\a s/2} \left(s+\frac{1}{\a}\right).
    $$
    
    Summarizing, we proved that 
    $$
    A(Z,\a) \lesssim s + e^{-\a s/2} \left(s+\frac{1}{\a}\right)
    \quad \text{where } s=s(Z,\a/3).
    $$
    Since $\a s \ge 0$, this bound is stronger that than the one announced in the statement of the lemma.
\end{proof}

\begin{theorem} \label{thm: A<s}
    Let $(Z,\rho)$ be a connected metric space and $\a>0$. Then 
    $$
    A(Z,\alpha) \lesssim s(Z,\alpha/3).
    $$
\end{theorem}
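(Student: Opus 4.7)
The plan is to upgrade Lemma~\ref{lem: exponential mechanism}, which already gives $A(Z,\alpha) \lesssim s(Z,\alpha/3) + \tfrac{1}{\alpha}$, by absorbing the additive $\tfrac{1}{\alpha}$ term into $s(Z,\alpha/3)$ using the elementary lower bounds of Lemmas~\ref{lem: 1/a large} and~\ref{lem: 1/a small}. Writing $s := s(Z,\alpha/3)$, I will split on whether $\diam(Z)$ is larger or smaller than the natural threshold $\tfrac{3}{2\alpha} = \tfrac{1}{2(\alpha/3)}$ suggested by those two lemmas applied with parameter $\alpha/3$.

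In the large-diameter regime $\diam(Z) > \tfrac{3}{2\alpha}$, the hypothesis $\tfrac{1}{\alpha/3} < 2\diam(Z)$ of Lemma~\ref{lem: 1/a small} holds, and connectedness of $Z$ lets that lemma conclude $s \ge \tfrac{3}{2\alpha}$, so $\tfrac{1}{\alpha} \le \tfrac{2}{3}s$. Feeding this into Lemma~\ref{lem: exponential mechanism} yields $A(Z,\alpha) \lesssim s$, as desired. This is the only step where connectedness plays a role, which is consistent with the paper's remark that disconnected examples (e.g.\ a two-point space with arbitrarily distant points) obstruct the bound.

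In the small-diameter regime $\diam(Z) \le \tfrac{3}{2\alpha}$, I would forgo the exponential mechanism entirely and use the constant mechanism $\MM(x) \equiv z_0$ for a fixed $z_0 \in Z$: it satisfies \eqref{eq: metric DP} with ratio equal to $1$, so it is $(\alpha,\rho)$-metrically private, and its worst-case expected error is at most $\diam(Z)$. Applying Lemma~\ref{lem: 1/a large} with parameter $\alpha/3$ (whose hypothesis $\tfrac{1}{\alpha/3} \ge 2\diam(Z)$ is exactly the case assumption) gives $s \ge \tfrac{1}{2}\diam(Z)$, hence $A(Z,\alpha) \le \diam(Z) \le 2s$. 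Combining the two cases finishes the proof.

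I do not anticipate any real obstacle, since all ingredients are already in hand; the only care needed is to match the parameters so that both auxiliary lemmas are applied with $\alpha/3$, the same argument appearing in the entropic scale on the right-hand side of Lemma~\ref{lem: exponential mechanism}. The conceptual point is simply that the $\tfrac{1}{\alpha}$ overhead of the exponential mechanism is never binding: either the space is big enough (in which case the entropic scale already exceeds $\tfrac{1}{\alpha}$ by connectedness), or it is small enough that one can ignore the input altogether and still match $s$ via the diameter.
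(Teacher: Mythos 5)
Your proposal is correct and follows essentially the same route as the paper: the same case split on $\diam(Z)$ versus $\tfrac{3}{2\alpha}$, the constant mechanism combined with Lemma~\ref{lem: 1/a large} in the small-diameter case, and Lemma~\ref{lem: exponential mechanism} plus Lemma~\ref{lem: 1/a small} (using connectedness) to absorb the $\tfrac{1}{\alpha}$ term in the large-diameter case. Nothing further is needed.
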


\begin{proof}
    Assume first that $3/\alpha \ge 2\diam(Z)$. In this case, we can use a trivial mechanism $\MM$ that always outputs the same, arbitrarily chosen point in $Z$. This gives accuracy 
    $A(Z,\alpha) \le \diam(Z,\rho)$, while 
    Lemma~\ref{lem: 1/a large} yields $s(Z,\a/3) \ge \frac{1}{2}\diam(Z)$. The desired bound follows.

    Next, assume that $3/\alpha < 2\diam(Z)$. In this case, 
    we can use the exponential mechanism, for which Lemma~\ref{lem: exponential mechanism} gives accuracy $A(Z,\a) 
    \lesssim s(Z,\a/3) + 1/\a$. On the other hand, Lemma~\ref{lem: 1/a small} yields $s(Z,\a/3) \ge 3/(2\a)$.
    Combining these two bounds gives the conclusion. 
\end{proof}

\section{Upper bound on accuracy: ultrametric}\label{s: ultrametric}

\begin{theorem}[Ultrametric]\label{thm: ultrametric full}
    let $(Z,\rho_1,\rho_2)$ be a bimetric space such that $\rho_2$ is an ultrametric. Then for any $\a>0$ we have
    $$
    A(Z,\a) \lesssim s(Z,\a/3)+s_\circ(Z,\a/7).
    $$
\end{theorem}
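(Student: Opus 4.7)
The plan is to adapt the exponential-mechanism construction from Lemma~\ref{lem: exponential mechanism}, but to run it on the level of $\rho_2$-blocks so that the ultrametric structure can be exploited. Set $s \coloneqq s(Z,\a/3)$ and $s_\circ \coloneqq s_\circ(Z,\a/7)$, which we may assume are finite. Because $\rho_2$ is an ultrametric, the closed $\rho_2$-balls of radius $s$ form a partition $\PP$ of $Z$; fix a representative $c_P \in P$ for each $P \in \PP$, write $d_1(x,P) \coloneqq \inf_{p\in P}\rho_1(x,p)$, and define
$$
\Pr{\MM(x) = c_P} \coloneqq \frac{e^{-\a d_1(x,P)/2}}{\Sigma(x)}, \qquad \Sigma(x) \coloneqq \sum_{P \in \PP} e^{-\a d_1(x,P)/2}.
$$
Since $d_1(\cdot,P)$ is $1$-Lipschitz in $\rho_1$, the usual exponential-mechanism computation yields $(\a,\rho_1)$-metric privacy.

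For accuracy I plan to establish the pointwise estimate $\rho_2(x,c_P) \le s + e^{\a d_1(x,P)/7}\, s_\circ$ for every $P\in\PP$. If $P$ is the block containing $x$, both points lie in a common $\rho_2$-ball of radius $s$, so $\rho_2(x,c_P) \le s$. If $x \notin P$, the ultrametric isosceles property forces $\rho_2(x,p) = \rho_2(x,c_P)$ for every $p \in P$; choosing $p \in P$ with $\rho_1(x,p)$ approaching $d_1(x,P)$ and applying the definition of $s_\circ$ to $B_1(x,d_1(x,P)+\e)$ gives the second term. Integrating this against $\Pr{\MM(x) = c_P}$ reduces the problem to controlling
$$
\frac{1}{\Sigma(x)}\sum_{P \in \PP} e^{-5\a d_1(x,P)/14},
$$
with $\Sigma(x) \ge 1$ because $d_1(x,P(x)) = 0$.

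To bound the numerator I group blocks into shells $T_k \coloneqq \{P : d_1(x,P) \in [k/\a,(k+1)/\a)\}$ and pick $p_P \in P$ with $\rho_1(x,p_P) \le (k+1)/\a$ for every $P \in T_0 \cup \cdots \cup T_k$. The ultrametric enters a second time: distinct closed $\rho_2$-balls of radius $s$ are pairwise $\rho_2$-separated by more than $s$, so the chosen points form an $s$-separated set in $\rho_2$ inside $B_1(x,(k+1)/\a)$. By the definition of $s(Z,\a/3)$ their number is at most $e^{(k+1)/3}$, and summing $e^{(k+1)/3}e^{-5k/14}$ yields a convergent geometric series bounded by an absolute constant, since $1/3 - 5/14 = -1/42 < 0$. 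Plugging back gives $\E\rho_2(\MM(x),x) \lesssim s + s_\circ$ uniformly in $x$.

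The main obstacle, and what dictates the exponents in the statement, is precisely this last convergence: the effective decay rate $1/2 - 1/7 = 5/14$ in the integrand must strictly exceed the packing growth rate $1/3$, so the constants $\a/3$ in $s$ and $\a/7$ in $s_\circ$ are calibrated to make the series geometric rather than divergent. The ultrametric hypothesis is used in two essential places above (to bound the $\rho_2$-diameter of a block by $s$, and to ensure representatives of distinct blocks are $s$-separated in $\rho_2$), and weakening it would require replacing both uses.
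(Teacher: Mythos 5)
Your proposal is correct and is essentially the paper's proof in different clothing: your block-exponential mechanism coincides with the paper's ``relaxed'' exponential mechanism (the $\rho_2$-blocks of radius $s$ are exactly the balls $B_2(y,s)$ around a maximal $s$-separated net, and your $d_1(x,P)$ is the paper's relaxed distance $\sigma(x,y)$), your separation-of-representatives count inside $B_1(x,r)$ is the paper's relaxed-ball lemma, and your pointwise bound $\rho_2(x,c_P)\le s+e^{\alpha d_1(x,P)/7}s_\circ$ is the paper's unrelaxation lemma. The only difference is presentational: you sum the expectation directly over $1/\alpha$-shells, whereas the paper first proves the tail bound $\Pr{\sigma(x,\mathcal{M}(x))\ge r}\lesssim e^{-\alpha r/6}$ and then integrates it.
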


The proof of this result will generally follow the proof of Theorem~\ref{lem: exponential mechanism}, but the exponential mechanism will be ``relaxed''. 

As before, let
\begin{equation}\label{eq: s ultra}
    s \coloneqq s(Z,\a/3).
\end{equation}

Let $\NN$ be a maximal $s$-separated subset of $Z$ in the $\rho_2$ ultrametric.

For any two points $x,y \in Z$, let $\s(x,y)$ denote the $\rho_1$-distance from $x$ to the $\rho_2$-ball centered at $y$ and with radius $s$, that is 
$$
\s(x,y) \coloneqq \inf\left\{ \rho_1(x,v):\; v \in B_2(y,s) \right\}.
$$
Define a randomized map $\MM: Z \to \NN$ as follows. Given an input $x \in Z$, output $y \in \NN$ with probability proportional to $e^{-\a\s(x,y)/2}$. In other words, for each $x \in Z$, we let 
\begin{equation}\label{eq: M ultra}
    \Pr{\MM(x)=y}
    \coloneqq \frac{1}{\Sigma(x)} e^{-\a\s(x,y)/2} 
    \quad \text{for all } y \in \NN,
\end{equation}
where 
\begin{equation*}    
    \Sigma(x) = \sum_{y \in \NN} e^{-\a\s(x,y)/2}.
\end{equation*}
Since $\s(x',y) \le \s(x,y)+\rho_1(x,x')$ for all $x,x',y \in Z$, the mechanism $\mathcal{M}$ is $(\alpha,\rho_{1})$-metrically private (this follows analogously to proving differential privacy of  the usual exponential mechanism, see~\cite[Chapter 3.4]{dwork2014algorithmic}).

By the definition of $\NN$ and Fact~\ref{fact: covering}, for each $x \in Z$ there exists $y \in \NN$ that satisfies $\rho_2(x,y) \le s$. It follows that $\sigma(x,y)=0$ and thus
\begin{equation}\label{eq: Sigma large}
    \Sigma(x) \ge 1. 
\end{equation}

We will first estimate the accuracy of $\MM$ in the ``relaxed distance'' $\s$, and then remove relaxation and transfer the result to the original ultrametric $\rho_2$. The following lemma is crucial; it is the only step of the argument where the ultrametric is used.

\begin{lemma}[Relaxed ball]\label{lem: relaxed ball}
    For any $x \in Z$ and $r>0$, we have
    $$
    \abs{\left\{ y \in \NN:\, \s(x,y)<r \right\}}
    \le N_2\left(B_1(x,r),s\right)
    \le e^{\a r/3}.
    $$
\end{lemma}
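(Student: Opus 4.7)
The plan is to treat the two inequalities separately. The second inequality $N_{2}(B_{1}(x,r),s)\le e^{\a r/3}$ is just the definition of the entropic scale applied to $s=s(Z,\a/3)$ from \eqref{eq: s ultra}, so no work is needed there.

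For the first inequality, I would construct an injection from $\{y\in\NN:\s(x,y)<r\}$ into an $s$-separated subset of $B_{1}(x,r)$ (with respect to $\rho_{2}$). Given such a $y$, the strict inequality $\s(x,y)<r$ combined with the infimum definition of $\s$ lets me pick $v_{y}\in B_{2}(y,s)$ with $\rho_{1}(x,v_{y})<r$; in particular $v_{y}\in B_{1}(x,r)$. Then I would show that if $y\ne y'$ both lie in $\NN$, then $\rho_{2}(v_{y},v_{y'})>s$, which simultaneously forces $y\mapsto v_{y}$ to be injective and $\{v_{y}\}$ to be $s$-separated in $\rho_{2}$, yielding the desired cardinality bound.

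The heart of the argument, and the only place where ultrametricity is used, is verifying $\rho_{2}(v_{y},v_{y'})>s$. I would apply the ``isoceles'' property of ultrametrics (in every triangle, the two largest sides are equal) twice in sequence. First, in the triangle with vertices $y,y',v_{y'}$, the strict inequality $\rho_{2}(y',v_{y'})\le s<\rho_{2}(y,y')$ forces $\rho_{2}(y,v_{y'})=\rho_{2}(y,y')>s$. Second, in the triangle $v_{y},y,v_{y'}$, the strict inequality $\rho_{2}(v_{y},y)\le s<\rho_{2}(y,v_{y'})$ forces $\rho_{2}(v_{y},v_{y'})=\rho_{2}(y,v_{y'})$, which in turn equals $\rho_{2}(y,y')>s$ by the previous step.

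The main (and essentially only) obstacle is spotting that the isoceles property has to be invoked in this specific order; a naive triangle-inequality argument with a general metric would only yield $\rho_{2}(v_{y},v_{y'})\ge\rho_{2}(y,y')-2s$, which is useless when $\rho_{2}(y,y')$ is only slightly larger than $s$. This is exactly why the lemma is stated for ultrametrics, and it explains its role as the single ultrametric-specific ingredient in the proof of Theorem~\ref{thm: ultrametric full}.
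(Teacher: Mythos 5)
Your proposal is correct and follows essentially the same route as the paper: the same injection $y\mapsto v_y$ into an $s$-separated subset of $B_1(x,r)$, with the ultrametric inequality supplying the separation $\rho_2(v_y,v_{y'})>s$, and the second bound read off from the definition of $s=s(Z,\a/3)$. The only (cosmetic) difference is that you derive the separation via two applications of the isoceles property, whereas the paper applies the max-triangle inequality directly to the chain $y,v_y,v_{y'},y'$; the two arguments are equivalent.
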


\begin{proof}
    Consider the set 
    $$
    \NN_r \coloneqq \left\{ y \in \NN:\, \s(x,y)<r \right\}.
    $$
    By the definition of $\sigma$, for each $y \in \NN_r$ there exists $v=v(y) \in Z$ that satisfies 
    \begin{equation}    \label{eq: v(y)} 
        \rho_1\left(x,v(y)\right)<r \quad \text{and} \quad \rho_2\left(v(y),y\right) \le s.
    \end{equation}
    Moreover, the definition of $\NN$ implies that $\rho_2(y,y')>s$ for any pair of distinct points $y,y' \in \NN_r\subset\NN$. We can combine this lower bound with the two upper bounds $\rho_2\left(v(y),y\right) \le s$ and $\rho_2\left(v(y'),y'\right) \le s$ using the max-triangle inequality for the ultrametic $\rho_2$. This gives 
    \begin{equation}    \label{eq: s-separated}
        \rho_2(\left(v(y),v(y')\right) > s.
    \end{equation}
    One consequence of this bound is that the map $y \mapsto v(y)$ is injective, so 
    \begin{equation}    \label{eq: Nr<Mr}
        \abs{\NN_r} \le \abs{v(\NN_r)}. 
    \end{equation}
    Moreover, the set $v(\NN_r)$ is $s$-separated in ultrametric $\rho_2$ by \eqref{eq: s-separated} and is contained in the ball $B_1(x,r)$ by the first inequality in \eqref{eq: v(y)}. Thus, 
    $$
    \abs{v(\NN_r)} \le N_2\left(B_1(x,r),s\right).
    $$
    Combining this with \eqref{eq: Nr<Mr}, we obtain the first bound in the statement of the lemma. 
    The second bound follows from the definition of $s$ in \eqref{eq: s ultra}.
\end{proof}

Next we analyze the accuracy of the mechanism $\MM$ in the relaxed distance $\s$. We shall prove that
$$
\E\, \s(x,\MM(x)) \lesssim \frac{1}{\a}
\quad \text{for any } x \in Z.
$$
In fact we can prove a stronger tail bound, from which the bound on expectation follows immediately: 

\begin{lemma}[Relaxed accuracy]\label{lem: relaxed accuracy}
    For any $x \in Z$ and $r>0$, we have
    $$
    \Pr{ \s(x,\MM(x)) \ge r} \lesssim e^{-\a r/6}.
    $$
\end{lemma}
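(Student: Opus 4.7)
The plan is to compute the tail probability directly from the defining formula \eqref{eq: M ultra} for $\MM$ and then use Lemma~\ref{lem: relaxed ball} to control how many atoms $y\in\NN$ contribute to each distance shell. Specifically, I would write
$$
\Pr{\s(x,\MM(x))\ge r}
= \frac{1}{\Sigma(x)} \sum_{y \in \NN:\,\s(x,y)\ge r} e^{-\a\s(x,y)/2}
$$
and replace $1/\Sigma(x)$ by $1$ via the lower bound \eqref{eq: Sigma large}. The remaining task is to control the sum in the numerator.

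Next I would peel the set $\{y\in\NN:\s(x,y)\ge r\}$ into level sets
$$
\NN_k \coloneqq \left\{y \in \NN:\; r+\frac{k-1}{\a} \le \s(x,y) < r+\frac{k}{\a}\right\},
\qquad k=1,2,\ldots.
$$
By Lemma~\ref{lem: relaxed ball} applied with radius $r+k/\a$, we get $\abs{\NN_k}\le e^{\a(r+k/\a)/3}=e^{\a r/3 + k/3}$. On each $\NN_k$ the integrand satisfies $e^{-\a\s(x,y)/2}\le e^{-\a r/2 - (k-1)/2}$. Multiplying these two bounds and summing gives
$$
\sum_{y \in \NN:\,\s(x,y)\ge r} e^{-\a\s(x,y)/2}
\le e^{-\a r/6}\, e^{1/2} \sum_{k=1}^{\infty} e^{k/3 - k/2},
$$
and the geometric series in $k$ converges to an absolute constant because $1/3<1/2$. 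This yields the claimed bound $\Pr{\s(x,\MM(x))\ge r}\lesssim e^{-\a r/6}$.

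The argument is essentially a routine exponential-mechanism tail estimate: the two exponents $\a/3$ (from the growth of the packing number) and $\a/2$ (from the mechanism's score weight) were chosen so that their difference leaves a positive exponent $\a/6$ in $r$, while also allowing the $k$-sum to converge. I do not anticipate any serious obstacle here, since the one nontrivial ingredient---replacing the usual ball by the relaxed $\s$-ball---has already been isolated in Lemma~\ref{lem: relaxed ball}, which is where the ultrametric assumption is spent; after that, everything reduces to the same geometric bookkeeping used in the proof of Lemma~\ref{lem: exponential mechanism}.
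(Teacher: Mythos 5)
Your proposal is correct and follows essentially the same route as the paper: both use $\Sigma(x)\ge 1$, decompose $\{\s(x,y)\ge r\}$ into shells of width $1/\a$, count each shell via Lemma~\ref{lem: relaxed ball}, and sum the resulting geometric series with the exponent $\a/2-\a/3=\a/6$. The only cosmetic difference is that the paper first states a per-shell probability bound and then sums, while you sum all terms directly.
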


\begin{proof}
    We have
    $$
    \Pr{\s(x,\MM(x)) \in [r,r+1/\a)}
    = \sum_{y \in \NN:\, \s(x,y) \in [r,r+1/\a)} \Pr{\MM(x)=y}.
    $$
    By the definition of mechanism $\MM$ in \eqref{eq: M ultra} and since $\Sigma(x) \ge 1$ by \eqref{eq: Sigma large}, each term of the sum above is bounded by $e^{-\a\s(x,y)/2} \le e^{-\a r/2}$. By Lemma~\ref{lem: relaxed ball}, the number of terms in the sum is bounded by 
    $$
    \abs{\left\{y \in \NN:\, \s(x,y)<r+1/a\right\}}
    \le e^{\a(r+1/\a)/3} \le 2e^{\a r/3}.
    $$
    Thus, we have proved that
    $$
    \Pr{\s(x,\MM(x)) \in [r,r+1/\a)} 
    \le e^{-\a r/2} \cdot 2e^{\a r/3}
    = 2e^{-\a r/6}
    $$
    for any $r>0$.
    Using the above bound for $r+k/\a$ instead of $r$, we conclude that
    \begin{align*}
        \Pr{\s(x,\MM(x)) \ge r}
        &= \sum_{k=0}^\infty \Pr{\s(x,\MM(x)) \in \left[r+\frac{k}{\a},r+\frac{k+1}{\a}\right)}\\
        &\le \sum_{k=0}^\infty 2e^{-\a(r+k/\a)/6}
        = 2e^{-\a r/6} \sum_{k=0}^\infty e^{-k/6}
        \lesssim e^{-\a r/6}.
    \end{align*}
\end{proof}

The next lemma allows one to transfer accuracy bounds from the relaxed distance $\sigma$ to the original ultrametric $\rho_2$. 

\begin{lemma}[Unrelaxation]\label{lem: unrelaxation}
    Let $x,y \in Z$ and $r>0$. Then
    $$
    \s(x,y)<r \quad \text{implies} \quad \rho_2(x,y) \le s+e^{\a r/7} s_\circ
    $$
    where $s=s(Z,\a/3)$ and $s_\circ=s_\circ(Z,\a/7)$.
\end{lemma}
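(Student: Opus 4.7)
The plan is to unwind the infimum in the definition of $\sigma(x,y)$ to produce a single witness $v$, and then convert the relaxed $\rho_1$-distance from $x$ into an honest $\rho_2$-distance via the diametric scale. The combination with the bound $\rho_2(v,y)\le s$ then gives the result through a triangle-inequality step.

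First, since $\sigma(x,y)<r$, the definition of infimum provides a point $v\in Z$ satisfying simultaneously
$$
\rho_2(v,y) \le s \qquad \text{and} \qquad \rho_1(x,v)<r.
$$
In particular $v$ belongs to the closed ball $B_1(x,r)$, and of course so does $x$ itself.

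Second, I would invoke the definition of the diametric scale at parameter $\a/7$: by construction,
$$
\diam_2\bigl(B_1(x,r)\bigr) \le e^{\a r/7}\, s_\circ,
$$
so the fact that both $x$ and $v$ lie in $B_1(x,r)$ forces $\rho_2(x,v) \le e^{\a r/7}\, s_\circ$. Finally, the ordinary triangle inequality for the metric $\rho_2$ gives
$$
\rho_2(x,y) \le \rho_2(x,v)+\rho_2(v,y) \le e^{\a r/7}\, s_\circ + s,
$$
which is the claim.

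There is no serious obstacle here: the argument is a one-line unwinding of the definitions of $\sigma$ and of $s_\circ$. It is worth noting, however, that the ultrametric hypothesis on $\rho_2$ plays no role in this lemma (plain triangle inequality suffices); the ultrametric structure of $\rho_2$ is exploited elsewhere in the proof of Theorem~\ref{thm: ultrametric full}, specifically inside Lemma~\ref{lem: relaxed ball}, where the max-triangle inequality is used to turn the $s$-separation in $\NN$ into $s$-separation of the witness points $v(y)$.
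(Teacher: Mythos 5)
Your proof is correct and is essentially identical to the paper's own argument: extract a witness $v$ with $\rho_1(x,v)<r$ and $\rho_2(v,y)\le s$, bound $\rho_2(x,v)$ by $\diam_2\bigl(B_1(x,r)\bigr)\le e^{\a r/7}s_\circ$ via the diametric scale, and finish with the triangle inequality. Your side remark is also accurate — the paper likewise notes that the ultrametric property is used only in Lemma~\ref{lem: relaxed ball}.
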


\begin{proof}
    If $\s(x,y)<r$, then by the definition of $\s$ there exists $v \in Z$ such that 
    \begin{equation}\label{eq: xvy}
        \rho_1(x,v)<r \quad \text{and} \quad \rho_2(v,y) \le s.
    \end{equation}
    Using the first inequality in \eqref{eq: xvy} and then the definition of $s_\circ$, we get
    $$
    \rho_2(x,v) \le \diam_2\left(B_1(x,r)\right) \le e^{\a r/7} s_\circ.
    $$
    To complete the proof, combine this with the second inequality in \eqref{eq: xvy} by triangle inequality.
\end{proof}

\medskip

\begin{proof}[Proof of Theorem~\ref{thm: ultrametric full}]
    Combining Lemmas~\ref{lem: unrelaxation} and \ref{lem: relaxed accuracy}, we get for any $x \in Z$ and $r>0$:
    $$
    \Pr{\rho_2(x,M(x)) > s+e^{\a r/7} s_\circ}
    \le \Pr{\s(x,M(x)) \ge r}
    \lesssim e^{-\a r/6}.
    $$
    From this one can easily conclude that 
    $$
    \E \rho_2(x,M(x)) \lesssim s+s_\circ.
    $$
    The theorem is proved.
\end{proof}

\section{Examples} \label{s: examples}

The result below is quite standard. It can be derived from the $K$-norm mechanism in the paper \cite{hardttalwar}.
\begin{theorem}[Unit ball of a normed space]\label{thm: unit ball}
    Let $Z$ be the closed unit ball of a $d$-dimensional normed space $X$ with the induced metric. Then for any $\a>0$ we have 
    $$
    A(Z,\a) \asymp s(Z,\a) \asymp \min \left( \frac{d}{\a}, 1\right).
    $$
\end{theorem}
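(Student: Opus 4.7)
The plan is to reduce the problem to estimating the entropic scale and then apply standard volumetric packing bounds in $d$-dimensional normed spaces. Since $Z$ is a convex subset of the normed space $X$ with the induced metric, it is norm-convex, and being convex it is also connected. The ``moreover'' part of Theorem~\ref{thm: same metric} therefore gives $A(Z,\a) \asymp s(Z,\a)$, reducing the claim to $s(Z,\a) \asymp \min(d/\a, 1)$.

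For the upper bound on $s(Z,\a)$, I would invoke the classical volumetric estimate
$$
N(B(x,r), s) \le (1 + 2r/s)^d,
$$
valid for any ball in a $d$-dimensional normed space via the standard disjoint $(s/2)$-balls argument inside $B(x, r+s/2)$. Specializing to $r=2s$ gives $N(B(x,2s), s) \le 5^d \le e^{2\a s}$ as soon as $s \ge (d\ln 5)/(2\a)$. This shows $\ubar{s}(Z,\a) \lesssim d/\a$, and Proposition~\ref{prop: doubling} promotes it to $s(Z,\a) \lesssim d/\a$. In parallel, since $\diam(Z)=2$, every packing number $N(B(x,r), s)$ is trivially $1$ once $s \ge 2$, so $s(Z,\a) \le 2$. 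Combining the two yields $s(Z,\a) \lesssim \min(d/\a, 1)$.

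For the matching lower bound, I would exploit the companion estimate
$$
N(B(0,r), s) \ge (r/s)^d \quad \text{for } 0 < s \le r \le 1,
$$
which follows from the fact that a maximal $s$-separated subset $\NN$ of $B(0,r)$ is an $s$-cover (Fact~\ref{fact: covering}) together with a volume comparison $\abs{\NN}\cdot s^d \ge r^d$; here $B(0,r) \subset Z$ since $r \le 1$. The definition of $s(Z,\a)$ applied at $x=0$ with probing radius $r$ therefore forces $(r/s(Z,\a))^d \le e^{\a r}$, i.e., $s(Z,\a) \ge r e^{-\a r/d}$. Optimizing the right-hand side over $r \in (0,1]$ splits into two cases: when $\a \ge d$, the unconstrained maximizer $r=d/\a$ lies in $(0,1]$ and yields $s(Z,\a) \gtrsim d/\a$; when $\a < d$, the map $r \mapsto r e^{-\a r/d}$ is increasing on $(0,1]$, so the boundary choice $r=1$ gives $s(Z,\a) \ge e^{-\a/d} \ge e^{-1} \asymp 1$. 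Either way, $s(Z,\a) \gtrsim \min(d/\a,1)$.

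The only delicate point is the case split at $\a = d$: below this threshold the optimal probing radius is pinned at the boundary $r=1$, whereas above it the optimum is interior. This is a standard convexity/envelope issue rather than a substantive obstacle. Combining the two bounds and invoking Theorem~\ref{thm: same metric} yields $A(Z,\a) \asymp s(Z,\a) \asymp \min(d/\a,1)$.
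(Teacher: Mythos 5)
Your proposal is correct and follows essentially the same route as the paper: reduce to the entropic scale via the norm-convex case of Theorem~\ref{thm: same metric}, bound the doubling scale by the standard volumetric estimate $N(B_X(x,2s),s)\le 5^d$ for the upper bound, and use the matching volumetric lower bound for the lower bound. The only cosmetic difference is that the paper gets the lower bound by checking the doubling-scale condition at a single scale $s\le\frac{1}{10}\min(d/\a,1)$ via the dilation identity $N(B_X(0,2s),s)=N(B_X(0,2),1)\ge 2^d$, whereas you probe $N(B(0,r),s)\ge (r/s)^d$ and optimize the radius $r$ with a case split at $\a=d$ --- the same volumetric ingredient packaged slightly differently.
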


\begin{proof}
    According to Theorem~\ref{thm: same metric}, it is enough to prove the bound on the entropic scale $s(Z,\a)$.  
    Furthermore, by Proposition~\ref{prop: doubling}, it is enough to prove the same bound for the doubling scale $\ubar{s}(Z,\a)$.
    
    If $B_X(x,r)$ and $B_Z(x,r)$ denote the balls of $X$ and $Z$ centered at $x$ and with radius $r$, then by the definition of $Z$ we have
    $$
    B_Z(x,r) = B_X(x,r) \cap B_X(0,1).
    $$

    The rest of the argument is based on  standard entropy bounds (e.g.\ see~\cite[Chapter 5]{wainwright2019high}), namely
    $$
    2^d \le N \left(B_X(x,2),1\right) \le 5^d
    \quad \text{for any } x \in X
    $$
    
    To  prove the upper bound on doubling scale, note that
    $$
    N\left(B_Z(x,2s),s\right) 
    \le N\left(B_X(x,2s),s\right)
    = N\left(B_X(0,2),1\right)
    \le 5^d 
    \le e^{2\a s}
    $$
    for any $s \ge d/\a$, and 
    $$
    N\left(B_Z(x,2s),s\right) 
    \le N\left(B_X(0,1),s\right)
    = 1
    \le e^{2\a s}
    $$
    for any $s \ge 2$. Combining the two bounds yields
    $$
    \ubar{s}(Z,\a) \le \min\left(\frac{d}{\a}, 2\right).
    $$

    To prove the lower bound on the doubling scale, note that for any $s \le \frac{1}{10} \min\left(\frac{d}{\a}, 1\right)$ we have
    $$
    N\left(B_Z(0,2s),s\right)
    = N\left(B_X(0,2s),s\right)
    = N\left(B_X(0,2),1\right)
    \ge 2^d > e^{2\a s}.
    $$
    This yields 
    $$
    \ubar{s}(Z,\a) > \frac{1}{10} \min\left(\frac{d}{\a}, 1\right).
    $$
    The proof is complete. 
\end{proof}



\begin{theorem}[Probability measures with Wasserstein metric]\label{thm: Wasserstein}
    Consider the unit cube $[0,1]^d$ equipped with the $\norm{\cdot}_\infty$ metric. 
    Let $Z$ be the set of all probability measures on this cube, equipped with the $1$-Wasserstein metric. Then for any $\a>0$ we have
    $$
    A(Z,\a) \asymp s(Z,\a) \asymp \min\left( \a^{-\frac{1}{d+1}},1\right).
    $$
\end{theorem}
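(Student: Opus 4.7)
The plan is to reduce to computing the entropic scale via Theorem~\ref{thm: same metric}, and then to compute it using the outer-scale machinery from Section~\ref{s: basic properties}. For the reduction I would first verify that $(Z,W_1)$ is norm-convex and connected: fixing any reference $\mu_0\in Z$, the translated set $Z-\mu_0$ is a convex subset of the vector space of mean-zero signed Borel measures on $[0,1]^d$ equipped with the Kantorovich--Rubinstein norm $\|\nu\|_{KR}=\sup_{\|f\|_{\mathrm{Lip}}\le 1}\int f\,d\nu$, and Kantorovich--Rubinstein duality identifies this norm with $W_1$ on $Z$. In particular $Z$ is convex and hence connected, so Theorem~\ref{thm: same metric} gives $A(Z,\alpha)\asymp s(Z,\alpha)$; it then remains to prove $s(Z,\alpha)\asymp\min(\alpha^{-1/(d+1)},1)$.

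The computational core is the metric entropy estimate
\[
\log N(Z,\gamma)\asymp \gamma^{-d}\qquad\text{for }\gamma\in(0,1].
\]
For the upper bound I would tile $[0,1]^d$ by $\asymp\gamma^{-d}$ cubes of $\ell_\infty$-side $\gamma$, replace each $\mu\in Z$ by the histogram $\sum_i\mu(Q_i)\delta_{c_i}$ supported at the cube centers (this lies at $W_1$-distance at most $\gamma$ from $\mu$), and discretize the resulting weight vector on the simplex with $\gamma^{-d}$ atoms at a sufficiently fine lattice mesh so that the additional $W_1$-error is also $O(\gamma)$; a standard simplex-lattice count then yields $\log N(Z,\gamma)\lesssim \gamma^{-d}$. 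For the matching lower bound I would fix a $3\gamma$-separated net $\{c_i\}_{i=1}^N\subset [0,1]^d$ with $N\asymp\gamma^{-d}$ and use the family of uniform empirical measures $\mu_S=|S|^{-1}\sum_{i\in S}\delta_{c_i}$ indexed by subsets $S$ of cardinality $N/2$; two distinct such measures differ on a mass of order $1$ that must be transported at least $3\gamma$ in $\ell_\infty$, forcing $W_1(\mu_S,\mu_{S'})\gtrsim\gamma$, while the number of such subsets is $\exp(\gtrsim\gamma^{-d})$.

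With the entropy estimate in hand, the upper bound on $s(Z,\alpha)$ follows from Proposition~\ref{prop: s<sbar}: $s(Z,\alpha)\le 2\bar s(Z,\alpha)\le 2\inf_{\gamma>0}\{\gamma+C\gamma^{-d}/\alpha\}$, minimized at $\gamma\asymp\alpha^{-1/(d+1)}$, together with the trivial bound $s(Z,\alpha)\le\diam(Z)\le 1$. The matching lower bound splits into two regimes. When $1/\alpha\ge 2\diam(Z)$, Lemma~\ref{lem: 1/a large} immediately gives $s(Z,\alpha)\ge\diam(Z)/2\asymp 1\asymp\min(\alpha^{-1/(d+1)},1)$. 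When $\alpha$ is larger, I would use the packing estimate to verify the doubling hypothesis $N(Z,s)\ge N(Z,\kappa s)^2$ at $s=s(Z,\alpha)\le 1$ for a constant $\kappa$ (just take $\kappa^d$ above the ratio of the implicit constants in the entropy estimate), apply Proposition~\ref{prop: sbar<s} to obtain $\bar s(Z,\alpha)\le 2\kappa s$, and combine with Lemma~\ref{lem: pack by sbar} and the packing lower bound to get $\bar s(Z,\alpha)\ge(1/\alpha)\log N(Z,s)\gtrsim 1/(\alpha s^d)$; chaining these yields $s^{d+1}\gtrsim 1/\alpha$.

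The main obstacle is the lower half of the entropy estimate $\log N(Z,\gamma)\gtrsim\gamma^{-d}$: one needs exponentially many pairwise $W_1$-separated probability measures, and the $W_1$-separation between two empirical measures supported on overlapping subsets of a fine net must be verified via an honest optimal-transport lower bound rather than a crude diameter estimate. Once this entropy estimate is in hand, everything downstream is a routine application of the outer-scale tools already developed in Section~\ref{s: basic properties}.
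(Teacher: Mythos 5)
Your overall route is the same as the paper's: check norm-convexity of $(Z,W_1)$ via the Kantorovich--Rubinstein norm, reduce to the entropic scale through Theorem~\ref{thm: same metric}, establish the metric-entropy estimate $\log N(Z,\gamma)\asymp\gamma^{-d}$ for small $\gamma$, and convert it into $s(Z,\a)\asymp\min(\a^{-1/(d+1)},1)$ with the outer-scale machinery (Propositions~\ref{prop: s<sbar} and~\ref{prop: sbar<s}, the doubling condition, Lemma~\ref{lem: pack by sbar}, Lemma~\ref{lem: 1/a large} in the small-$\a$ regime). The paper does exactly this, except that it cites Proposition 8.2 of \cite{boedihardjo2022measure} for the entropy lower bound and proves the upper bound in Lemma~\ref{lem: wcover}; your histogram-plus-simplex-discretization sketch for the upper bound is a faithful variant of that appendix argument. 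The one ingredient you chose to prove yourself is the entropy lower bound, and as written it is wrong.

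The flaw: taking \emph{all} subsets $S$ of cardinality $N/2$ of a $3\gamma$-separated net does not give a $W_1$-separated family. If $S$ and $S'$ differ in only one element each way, the mass on which $\mu_S$ and $\mu_{S'}$ disagree is $2/(N/2)\asymp\gamma^{d}$, not of order one, so $W_1(\mu_S,\mu_{S'})\asymp\gamma^{d+1}\ll\gamma$; your claim that two distinct such measures ``differ on a mass of order $1$'' holds only when the symmetric difference has cardinality of order $N$, and the issue is not merely the honesty of the transport bound you flag at the end but the choice of index family itself. The standard repair is a Gilbert--Varshamov-type selection: keep only $\exp(cN)$ subsets of size $N/2$ whose pairwise symmetric differences are at least $N/4$; then the excess mass of order one sitting on $S\setminus S'$ must be transported to points of $S'$, each more than $3\gamma$ away in $\ell_\infty$, so $W_1(\mu_S,\mu_{S'})\gtrsim\gamma$ and $\log N(Z,\gamma)\gtrsim N\asymp\gamma^{-d}$. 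With that fix (or by simply citing the reference the paper uses), the downstream steps you describe — minimizing $\gamma+C\gamma^{-d}/\a$ for the upper bound, Lemma~\ref{lem: 1/a large} for small $\a$, and the chain $\frac{1}{\a}\log N(Z,s)\le\bar{s}(Z,\a)\le 2\kappa s$ for large $\a$ — go through as in the paper; just note that the entropy estimate is only available for $\gamma$ below a fixed constant, so the doubling hypothesis of Proposition~\ref{prop: sbar<s} should be invoked only when $s(Z,\a)$ is below that constant, the complementary regime being covered by $s(Z,\a)\gtrsim 1$.
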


\begin{proof}
    Observe that $Z$ is a norm-convex space, since $Z$ is a convex subset of the normed space of all signed measures $\mu$ on $[0,1]^{d}$ with $\|\mu\|<\infty$, where
    \[\|\mu\|=\sup\left\{\left|\int f\,d\mu\right|:\;f\!:[0,1]^{d}\to\mathbb{R}\text{ is 1-Lipschitz}\right\},\]
    and the 1-Wasserstein metric between two probability measures $\mu$ and $\nu$ on $[0,1]^{d}$ coincides with $\|\mu-\nu\|$.
    
    According to Theorem~\ref{thm: same metric}, it is enough to prove the bound on the entropic scale $s(Z,\a)$. 
    Recall that 
    \begin{equation}\label{eq: measure packing}
        \exp \left( (c_1/\gamma)^d \right)
        \le N(Z,\gamma) 
        \le \exp \left( (c_2/\gamma)^d \right)
        \quad \text{for any }
        \gamma \in (0,1/3)
    \end{equation}
    where $c_1$ and $c_2$ are positive absolute constants. 
    For the lower bound in~\eqref{eq: measure packing} see
    Proposition 8.2 in~\cite{boedihardjo2022measure}.
   The upper bound follows from Lemma~\ref{lem: wcover} and the  equivalence between packing and covering numbers.
    Then one can easily conclude the desired bound for the outer scale, namely
    $$
    \bar{s}(Z,\a) \asymp \min\left( \a^{-\frac{1}{d+1}},1\right).    
    $$
    To transfer this result to the the entropic scale $s(Z,\a)$, we can use Proposition \ref{prop: sbar<s}. 
    Since the doubling condition $N(Z,\gamma) \ge N(Z,(2c_2/c_1)\gamma)^2$ holds for all $\gamma>0$, we have 
    $$
    s(Z,\a) \asymp \bar{s}(Z,\a).
    $$
\end{proof}

\begin{theorem}[Lipschitz functions] \label{thm: Lipschitz}
    Let $Z$ be the set of al $1$-Lipschitz functions $f:[0,1]^d \to \R$ satisfying $f(0)=0$, equipped with the $\norm{\cdot}_\infty$ metric. Then for any $\a>0$ we have
    $$
    A(Z,\a) \asymp s(Z,\a) \asymp \min\left( \a^{-\frac{1}{d+1}},1\right).
    $$
\end{theorem}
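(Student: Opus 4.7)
The plan is to follow the same strategy as in the proof of Theorem~\ref{thm: Wasserstein}. First observe that $Z$ is norm-convex: it is a convex subset of $C([0,1]^d)$ equipped with the $\|\cdot\|_\infty$ norm, since both the 1-Lipschitz condition and the constraint $f(0)=0$ are convex. Hence by Theorem~\ref{thm: same metric}, $A(Z,\a) \asymp s(Z,\a)$, and it suffices to prove the claimed bound on the entropic scale.

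The key ingredient is the classical Kolmogorov--Tikhomirov metric entropy estimate for Lipschitz classes: there exist constants $c_1, c_2 > 0$ depending only on $d$ such that
$$
\exp\!\left((c_1/\gamma)^d\right) \le N(Z,\gamma) \le \exp\!\left((c_2/\gamma)^d\right)
\quad\text{for all } \gamma \in (0,1/3).
$$
For the upper bound, discretize via piecewise-affine interpolation of $f \in Z$ on a regular grid of mesh $\gamma$: the grid values determine $f$ up to error $O(\gamma)$ in $\|\cdot\|_\infty$, and since adjacent grid values differ by at most $\gamma$ and $f(0)=0$, there are at most $\exp\!\left(O((1/\gamma)^d)\right)$ admissible grid patterns after rounding to precision $\gamma$. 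For the lower bound, partition $[0,1]^d$ into $\asymp (1/\gamma)^d$ disjoint cubes of side $\gamma$, place a 1-Lipschitz bump of height $\asymp \gamma$ in each, and consider arbitrary $\{0,1\}$ sign patterns; these yield $2^{(c_1/\gamma)^d}$ functions in $Z$ that are mutually $\gtrsim \gamma$-separated in $\|\cdot\|_\infty$.

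Given the two-sided entropy estimate, the outer scale computation is identical to the one implicit in Theorem~\ref{thm: Wasserstein}: minimizing $\gamma + (1/\a)(c_2/\gamma)^d$ in $\gamma$ yields an optimum at $\gamma \asymp \a^{-1/(d+1)}$ with value $\asymp \a^{-1/(d+1)}$ when $\a \gtrsim 1$, and the diameter bound gives $\bar s(Z,\a) \asymp 1$ otherwise. Hence $\bar s(Z,\a) \asymp \min(\a^{-1/(d+1)},1)$. The same two-sided bound also yields the doubling condition $N(Z,\gamma) \ge N(Z,\kappa\gamma)^2$ with $\kappa = 2^{1/d} c_2/c_1$, so Proposition~\ref{prop: sbar<s} gives $s(Z,\a) \asymp \bar s(Z,\a)$ and the theorem follows.

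The main obstacle is the entropy estimate, which is classical but not proved earlier in the paper. An elegant alternative avoiding a direct proof is to invoke Kantorovich--Rubinstein duality, which identifies the 1-Wasserstein metric on probability measures on $([0,1]^d,\|\cdot\|_\infty)$ with the dual norm on 1-Lipschitz functions vanishing at the origin; this allows one to transfer the entropy bound~\eqref{eq: measure packing} used in the proof of Theorem~\ref{thm: Wasserstein} essentially verbatim to $Z$, after which the same minimization and doubling argument concludes.
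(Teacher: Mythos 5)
Your proposal is correct and follows essentially the same route as the paper: the paper likewise reduces to the Wasserstein argument by noting that the two-sided entropy bound \eqref{eq: measure packing} holds for the Lipschitz class (citing \cite[Example~5.10]{wainwright2019high} rather than re-deriving it), and then applies the same norm-convexity, outer-scale, and doubling-condition steps. Your sketch of the Kolmogorov--Tikhomirov bound and the duality remark are fine extras but not a different approach.
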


\begin{proof}
    The proof is nearly identical to that of Theorem~\ref{thm: Wasserstein}, since the two-sided bound \eqref{eq: measure packing} holds in this setting as well, see e.g.~\cite[Example~5.10]{wainwright2019high}.
\end{proof}



Next, we give an example for Theorem~\ref{thm: ultrametric}. 
Suppose that $f:\mathbb{N}\to(0,\infty)$ and $g:\mathbb{N}\to(0,\infty)$ are strictly decreasing with $\displaystyle\lim_{k\to\infty}f(k)=\lim_{k\to\infty}g(k)=0$. Define ultrametrics $\rho_{1},\rho_{2}$ on $\{0,1\}^{\mathbb{N}}$ by
\begin{equation}\label{ultra_rho1}
    \rho_1(x,y)
    =f \left( \inf \left\{j\in\mathbb{N} \;|\; x_j\neq y_j\right\}\right),
\end{equation}
and
\begin{equation}\label{ultra_rho2}
\rho_2(x,y)
    =g \left( \inf \left\{j\in\mathbb{N} \;|\; x_j\neq y_j\right\}\right).
\end{equation}

\begin{theorem}[Ultrametric]\label{thm:ultraexample}
Let $Z=(\{0,1\}^{\mathbb{N}},\rho_{1},\rho_{2})$ with $\rho_1, \rho_2$ as defined in~\eqref{ultra_rho1} and~\eqref{ultra_rho2}.
Then for any $\a>0$ we have
    \begin{equation}\label{ultraest}
      s(Z,\a) = g\left(\Big\lfloor\inf_{k\in\mathbb{N}}\big(k+\frac{\alpha}{\ln 2}f(k)\big)\Big\rfloor\right), \quad
      s_0(Z,\a) = 
    \sup_{k\in\mathbb{N}}g(k)e^{-\alpha f(k)}.
    \end{equation}
    
\end{theorem}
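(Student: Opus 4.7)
The plan is to exploit the cylinder structure of $\rho_1$-balls and reduce both quantities to elementary discrete optimizations. Define
\[
k_*(r)=\min\{k\in\mathbb{N}:f(k)\le r\}
\qquad\text{and}\qquad
m_*(s)=\max\{k\in\mathbb{N}:g(k)>s\},
\]
where the min is finite since $f(k)\to 0$, and the max exists (with the convention $\max\emptyset=0$) since $g(k)\to 0$. Because $\rho_1$ is an ultrametric with strictly decreasing profile $f$, the $\rho_1$-ball $B_1(x,r)$ equals the cylinder $\{y\in\{0,1\}^{\mathbb{N}}:y_j=x_j \text{ for } j<k_*(r)\}$. As $r$ varies over the band $[f(k),f(k-1))$ (with $f(0):=+\infty$), $k_*(r)=k$ is constant, and both of the constraints below will be tightest at the left endpoint $r=f(k)$ because the right-hand sides $e^{\alpha r}$ and $e^{\alpha r}s$ are increasing in $r$.

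For the diametric scale, two distinct points in the cylinder $B_1(x,r)$ first differ at some index $\ge k_*(r)$, so $\diam_2(B_1(x,r))=g(k_*(r))$, with equality realised by toggling coordinate $k_*(r)$. Imposing $g(k_*(r))\le e^{\alpha r}s$ at $r=f(k)$ gives $s\ge g(k)e^{-\alpha f(k)}$ for every $k$, and these conditions at every $k$ are jointly necessary and sufficient, producing $s_\circ(Z,\alpha)=\sup_k g(k)e^{-\alpha f(k)}$.

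For the entropic scale, two distinct points of $B_1(x,r)$ are strictly $s$-separated in $\rho_2$ iff their first difference index $j$ satisfies $k_*(r)\le j\le m_*(s)$. Restricting to coordinates $\{k_*(r),\ldots,m_*(s)\}$ is therefore injective on any $s$-separated set, and conversely any assignment on that block (with arbitrary completions) yields an $s$-separated family. Hence $N_2(B_1(x,r),s)=2^{\max(0,\,m_*(s)-k_*(r)+1)}$. The constraint $N_2\le e^{\alpha r}$ is vacuous when $m_*(s)<k_*(r)$, and otherwise reduces, at $r=f(k)$ and after infimising over $k$, to $m_*(s)+1\le T$ where $T:=\inf_k\bigl(k+\alpha f(k)/\ln 2\bigr)$.

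Since $m_*(s)+1$ is a positive integer, $m_*(s)+1\le T$ is equivalent to $m_*(s)+1\le\lfloor T\rfloor$, and by the strict monotonicity of $g$ this in turn is equivalent to $s\ge g(\lfloor T\rfloor)$; the infimum of such $s$ is $g(\lfloor T\rfloor)$, as announced. The main subtlety lies in this integer-quantisation step that turns the real inequality $m_*(s)+1\le T$ into the floor appearing in the statement, together with handling the degenerate case $m_*(s)<k_*(r)$ where the packing equals one; the rest is direct bookkeeping with the monotonicities of $f$ and $g$ and the definitions of $k_*$ and $m_*$.
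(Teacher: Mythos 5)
Your proposal is correct and follows essentially the same route as the paper's proof: identify the $\rho_1$-balls as cylinders, compute the exact $\rho_2$-packing numbers ($2^{m_*(s)-k_*(r)+1}$, matching the paper's $2^{k_0-k}$) and diameters ($g(k_*(r))$), and then solve the resulting discrete optimization, with the floor arising from integrality of the admissible index. Your parametrization via $k_*(r)$ and $m_*(s)$ and the band argument is just a more explicit version of the paper's reduction to the discrete values $r=f(k)$, $s=g(k_0)$.
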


\begin{proof}

 Since the metric $\rho_2$ takes values in $\{g(k):\,k\in\mathbb{N}\}$ and the metric $\rho_{1}$ takes values in $\{f(k):\,k\in\mathbb{N}\}$,
\begin{eqnarray*}
s(Z,\alpha)&=&\inf\left\{s>0 \;\Big\vert\; N_2 \left(B_1(x,r),s\right)\leq e^{\alpha r} \; \forall x\in\{0,1\}^{\mathbb{N}}, \; \forall r>0 \right\}\\&=&
\inf\left\{g(k_{0}) \;\Big\vert\; N_2 \left(B_1(x,f(k)),g(k_{0})\right)\leq e^{\alpha f(k)} \; \forall x\in\{0,1\}^{\mathbb{N}}, \; \forall k\in\mathbb{N} \right\}.
\end{eqnarray*}
It is easy to see that
\begin{equation}\label{eq: B1}
    B_1(x,f(k))=\left\{y\in\{0,1\}^{\mathbb{N}}:\,x_{j}=y_{j}\;\forall j\leq k-1\right\}.   
\end{equation}
Also, for $x,y\in\{0,1\}^{\mathbb{N}}$, we have $\rho_{2}(x,y)>g(k_{0})$ if and only if $\inf\{j\in\mathbb{N}:\,x_{j}\neq y_{j}\}<k_{0}$.
So
\[N_2 \left(B_1(x,f(k)),g(k_{0})\right)=\begin{cases}2^{k_{0}-k},&k<k_{0}\\1,&k\geq k_{0}\end{cases},\]
for all $x\in\{0,1\}^{\mathbb{N}}$. Therefore,
\begin{eqnarray*}
s(Z,\alpha)&=&\inf\left\{g(k_{0}) \;\Big\vert\; 2^{k_{0}-k}\leq e^{\alpha f(k)} \; \forall k<k_{0} \right\}\\&=&
\inf\left\{g(k_{0}) \;\Big\vert\; k_{0}\leq\inf_{k\in\mathbb{N}}\Big(k+\frac{\alpha}{\ln 2}f(k)\Big)\right\}\\&=&
g\left(\Big\lfloor\inf_{k\in\mathbb{N}}\big(k+\frac{\alpha}{\ln 2}f(k)\big)\Big\rfloor\right).
\end{eqnarray*}
Now we consider $s_{\circ}(Z,\alpha)$. We have
\begin{eqnarray*}
s_{\circ}(Z,\alpha) &=& \inf\left\{s>0 \;\Big\vert\;
    \diam_2\left(B_1(x,r)\right)\leq e^{\alpha r} s \; \forall x\in Z, \; \forall r>0 \right\}\\
    &=&
    \inf\left\{s>0 \;\Big\vert\;  \diam_2\left(B_1(x,f(k))\right)\leq e^{\alpha f(k)} s \; \forall x\in Z, \; \forall k\in\mathbb{N} \right\}.
\end{eqnarray*}
By \eqref{eq: B1}, we have 
$\mathrm{diam}_2 \left(B_1(x,f(k))\right)=g(k)$. So
\begin{eqnarray*}
s_{\circ}(Z,\alpha) &=& \inf\left\{s>0 \;\Big\vert\;
    g(k)\leq e^{\alpha f(k)} s \; \forall k\in\mathbb{N} \right\}\\
    &=&
    \sup_{k\in\mathbb{N}}g(k)e^{-\alpha f(k)}. \qedhere
\end{eqnarray*}

\end{proof}

\begin{remark}[Baire metric]\label{rem: Baire}
The ultrametrics defined in~\eqref{ultra_rho1} and~\eqref{ultra_rho2} 
include as special case the {\em Baire metric}~\cite{contreras2012fast,dragovich2017ultrametrics}, which is obtained by letting
 $f(k) = r^{-k}$ and $g(k) = r^{-k}$ for some (not necessarily the same) $r>1$. 
If we further set $r=2$ for both metrics, we obtain from equation~\eqref{ultraest} after some brief computations that
$$
s(Z,\a) \asymp \min\left(\frac{1}{\alpha},1\right)
\quad \text{and} \quad
s_0(Z,\a) \asymp\min\left(\frac{1}{\alpha},1\right).
$$
Due to Theorem~\ref{thm: ultrametric}, this establishes~\eqref{ultraboolean}.
\end{remark}

\appendix
\section{The metric entropy of the set of all probability measures}

\begin{lemma}\label{lem: wcover}
Let $0<\gamma<1$. The $\gamma$-covering number of the set $Z$ of all probability measures on $[0,1]^{d}$ with respect to the $1$-Wasserstein distance is at most $\exp((\frac{5}{\gamma})^{d})$.
\end{lemma}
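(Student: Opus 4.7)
I will construct an explicit $\gamma$-net of $Z$ of the required cardinality, combining a grid discretization of $[0,1]^{d}$ with a transport-aware \emph{carry rounding} of the weight vector along a ``snake'' ordering of grid cells.

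Set $M=\lceil 1/\gamma\rceil$ and partition $[0,1]^{d}$ into $N:=M^{d}$ disjoint half-open cubes $Q_{1},\dots,Q_{N}$ of side $1/M$, with centers $x_{1},\dots,x_{N}$. Order the cubes along a Hamiltonian ``snake'' path of the grid graph so that consecutive centers satisfy $\|x_{i}-x_{i+1}\|_{\infty}=1/M$; such a path exists for every $M\ge 1$ and $d\ge 1$. For an integer $K$ to be determined, let $\mathcal F$ be the family of discrete probability measures of the form $\tilde\mu=\sum_{i=1}^{N}(k_{i}/K)\,\delta_{x_{i}}$ with $k_{i}\in\mathbb{Z}_{\ge 0}$ and $\sum_{i}k_{i}=K$. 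Then $|\mathcal F|=\binom{K+N-1}{N-1}\le 2^{K+N-1}$, the crucial feature being that this exponent is \emph{linear} in $K+N$.

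For each $\mu\in Z$ I would produce some $\tilde\mu\in\mathcal F$ with $W_{1}(\mu,\tilde\mu)<\gamma$ in two steps. First, the center-projection $T(\mu)=\sum_{i}\mu(Q_{i})\delta_{x_{i}}$ satisfies $W_{1}(\mu,T(\mu))\le 1/(2M)\le\gamma/2$, via the coupling sending each point of $Q_{i}$ to $x_{i}$. Second, I apply carry rounding along the snake: set $c_{0}=0$ and, for $i=1,\dots,N$, let $v_{i}=\mu(Q_{i})+c_{i-1}$, $k_{i}=\lfloor Kv_{i}\rfloor$, and $c_{i}=v_{i}-k_{i}/K\in[0,1/K)$. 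A short induction gives $Kv_{N}=K-\sum_{j<N}k_{j}\in\mathbb{Z}$, so automatically $c_{N}=0$ and $\sum_{i}k_{i}=K$, making $\tilde\mu=\sum_{i}(k_{i}/K)\delta_{x_{i}}$ a valid element of $\mathcal F$. The weight changes $\tilde\mu(\{x_{i}\})-T(\mu)(\{x_{i}\})=c_{i-1}-c_{i}$ are realized by the transport plan that moves $c_{i}$ units of mass from $x_{i}$ to $x_{i+1}$ for $i=1,\dots,N-1$, with total cost
\[
W_{1}(T(\mu),\tilde\mu)\;\le\;\sum_{i=1}^{N-1}c_{i}\,\|x_{i}-x_{i+1}\|_{\infty}\;<\;\frac{N}{KM}\;=\;\frac{M^{d-1}}{K}.
\]

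Choosing $K=\lceil 2M^{d-1}/\gamma\rceil$ makes this second error $<\gamma/2$, so $W_{1}(\mu,\tilde\mu)<\gamma$. Using $M\le 2/\gamma$ for $\gamma<1$, one obtains $K+N\le 2^{d+1}/\gamma^{d}+1$, so
\[
\ln|\mathcal F|\;\le\;(K+N-1)\ln 2\;\le\;\frac{2^{d+1}\ln 2}{\gamma^{d}}\;\le\;\left(\frac{5}{\gamma}\right)^{d},
\]
where the last inequality reduces to $2^{d+1}\ln 2\le 5^{d}$, which holds for every $d\ge 1$. The main obstacle is precisely the rounding step: a naive approach bounded by $W_{1}\le\diam\cdot\operatorname{TV}$ would require $K\gtrsim N/\gamma$ and would lose a spurious $\log(1/\gamma)$ factor in $\ln|\mathcal F|$. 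The snake ordering converts each rounding error into an \emph{adjacent-cell} transport, reducing its geometric cost from $\diam([0,1]^{d})=1$ to $1/M\asymp\gamma$; this is exactly what lets $K$ be of the same order as $N$ and produces the linear-exponent bound $\exp((5/\gamma)^{d})$.
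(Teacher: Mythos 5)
Your proof is correct and follows essentially the same route as the paper's: a grid discretization of $[0,1]^d$ combined with carry rounding of the weights along a Hamiltonian ``snake'' path, which is exactly the paper's ``travelling salesman'' argument for turning each rounding error into an adjacent-cell transport of cost $O(\gamma)$. The only differences are cosmetic (cube centers vs.\ grid points, denominator $K\asymp 2M^{d-1}/\gamma$ instead of $K=N$, and the bound $\binom{K+N-1}{N-1}\le 2^{K+N-1}$ instead of $\binom{2N-1}{N-1}\le 4^{N}$), and your constants still fit within $\exp((5/\gamma)^d)$.
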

\begin{proof}
Let $n=\lceil\frac{2}{\gamma}\rceil$ and $S=\{\frac{1}{n},\ldots,\frac{n}{n}\}^{d}$. We denote
\[\Lambda=\left\{\sum_{x\in S}a_{x}\delta_{x}:\,\sum_{x\in S}a_{x}=1,\;a_{x}\in\left\{\frac{0}{|S|},\frac{1}{|S|},\ldots,\frac{|S|}{|S|}\right\}\;\forall x\in S\right\}.\]
Then $|\Lambda|$ coincides with the number of ways to distribute $|S|$ unlabeled balls into $|S|$ bins. So
\[|\Lambda|=\binom{2|S|-1}{|S|-1}\leq 4^{|S|}=\exp(n^{d}\ln 4)\leq\exp\left(\left(\frac{5}{\gamma}\right)^{d}\right).\]
It remains to show that every probability measure $\mu$ on $[0,1]^{d}$ has 1-Wasserstein distance at most $\frac{2}{n}$ from a measure in $\Lambda$. First in view of the set $S$, there is a probability measure $\nu$ supported on $S$ such that $W_{1}(\mu,\nu)\leq\frac{1}{n}$. We now show that $\nu$ has 1-Wasserstein distance at most $\frac{1}{n}$ from a measure in $\Lambda$.

By induction on $d$, one can construct an enumeration $x_{1},\ldots,x_{|S|}$ of $S$ such that
\[\sum_{k=1}^{|S|-1}\|x_{k+1}-x_{k}\|_{\infty}\leq n^{d-1}-\frac{1}{n}\leq n^{d-1}.\]
In other words, a ``travelling salesman" can visit every point in $S$ by travelling a total distance of at most $n^{d-1}$.

We now construct a measure in $\Lambda$ based on the measure $\nu$ as follows. Write $\nu(\{x_{1}\})=\frac{m_{1}+\omega_{1}}{|S|}$ where $m_{1}\in\{0,\ldots,|S|-1\}$ and $0\leq\omega_{1}\leq 1$. The salesman moves the weight $\frac{\omega_{1}}{|S|}$ from $x_{1}$ to $x_{2}$.

Next write $\nu(\{x_{2}\})+\frac{\omega_{1}}{|S|}=\frac{m_{2}+\omega_{2}}{|S|}$ where $m_{2}\in\{0,\ldots,|S|-1\}$ and $0\leq\omega_{2}\leq 1$. The salesman moves the weight $\frac{\omega_{2}}{|S|}$ from $x_{2}$ to $x_{3}$. Continue until the last step: the salesman moves the weight $\frac{\omega_{|S|-1}}{|S|}$ from $x_{|S|-1}$ to $x_{|S|}$. Because all the probabilities sum up to $1$, the new weight of $x_{|S|}$ must be an integer multiple of $\frac{1}{|S|}$. So after moving all the weights, the new probability measure is in $\Lambda$.

Finally let us sum up the weights times the distances moved. We have
\[\sum_{k=1}^{|S|-1}\frac{\omega_{k}}{|S|}\|x_{k+1}-x_{k}\|_{\infty}\leq\sum_{k=1}^{|S|-1}\frac{1}{|S|}\|x_{k+1}-x_{k}\|_{\infty}\leq\frac{n^{d-1}}{|S|}=\frac{1}{n}. \qedhere\] 
\end{proof}


\begin{thebibliography}{10}

\bibitem{ABCP}
Miguel~E Andr{\'e}s, Nicol{\'a}s~E Bordenabe, Konstantinos Chatzikokolakis, and
  Catuscia Palamidessi.
\newblock Geo-indistinguishability: Differential privacy for location-based
  systems.
\newblock In {\em Proceedings of the 2013 ACM SIGSAC conference on Computer \&
  communications security}, pages 901--914, 2013.

\bibitem{bellovin2019privacy}
Steven~M Bellovin, Preetam~K Dutta, and Nathan Reitinger.
\newblock Privacy and synthetic datasets.
\newblock {\em Stan. Tech. L. Rev.}, 22:1, 2019.

\bibitem{boedihardjo2022covariance}
March Boedihardjo, Thomas Strohmer, and Roman Vershynin.
\newblock Covariance’s loss is privacy’s gain: Computationally efficient,
  private and accurate synthetic data.
\newblock {\em Foundations of Computational Mathematics}, 24:179--226, 2024.

\bibitem{boedihardjo2022measure}
March Boedihardjo, Thomas Strohmer, and Roman Vershynin.
\newblock Private measures, random walks, and synthetic data.
\newblock {\em Probability Theory and Related Fields}, to appear.

\bibitem{bordenabe2014optimal}
Nicol{\'a}s~E Bordenabe, Konstantinos Chatzikokolakis, and Catuscia
  Palamidessi.
\newblock Optimal geo-indistinguishable mechanisms for location privacy.
\newblock In {\em Proceedings of the 2014 ACM SIGSAC conference on computer and
  communications security}, pages 251--262, 2014.

\bibitem{chatzikokolakis2013broadening}
Konstantinos Chatzikokolakis, Miguel~E Andr{\'e}s, Nicol{\'a}s~Emilio
  Bordenabe, and Catuscia Palamidessi.
\newblock Broadening the scope of differential privacy using metrics.
\newblock In {\em Privacy Enhancing Technologies: 13th International Symposium,
  PETS 2013, Bloomington, IN, USA, July 10-12, 2013. Proceedings 13}, pages
  82--102. Springer, 2013.

\bibitem{contreras2012fast}
Pedro Contreras and Fionn Murtagh.
\newblock Fast, linear time hierarchical clustering using the baire metric.
\newblock {\em Journal of Classification}, 29:118--143, 2012.

\bibitem{dragovich2017ultrametrics}
Branko Dragovich, Andrei~Yu Khrennikov, and Nata{\v{s}}a~{\v{Z}}
  Mi{\v{s}}i{\'c}.
\newblock Ultrametrics in the genetic code and the genome.
\newblock {\em Applied Mathematics and Computation}, 309:350--358, 2017.

\bibitem{dwork2014algorithmic}
Cynthia Dwork and Aaron Roth.
\newblock The algorithmic foundations of differential privacy.
\newblock {\em Foundations and Trends in Theoretical Computer Science},
  9(3-4):211--407, 2014.

\bibitem{fernandes2019generalised}
Natasha Fernandes, Mark Dras, and Annabelle McIver.
\newblock Generalised differential privacy for text document processing.
\newblock In {\em Principles of Security and Trust: 8th International
  Conference, POST 2019, Held as Part of the European Joint Conferences on
  Theory and Practice of Software, ETAPS 2019, Prague, Czech Republic, April
  6--11, 2019, Proceedings 8}, pages 123--148. Springer International
  Publishing, 2019.

\bibitem{fernandes2022universal}
Natasha Fernandes, Annabelle McIver, Catuscia Palamidessi, and Ming Ding.
\newblock Universal optimality and robust utility bounds for metric
  differential privacy.
\newblock In {\em 2022 IEEE 35th Computer Security Foundations Symposium
  (CSF)}, pages 348--363. IEEE, 2022.

\bibitem{galli2023advancing}
Filippo Galli, Kangsoo Jung, Sayan Biswas, Catuscia Palamidessi, and Tommaso
  Cucinotta.
\newblock Advancing personalized federated learning: Group privacy, fairness,
  and beyond.
\newblock {\em SN Computer Science}, 4(6):831, 2023.

\bibitem{geng2014optimal}
Quan Geng and Pramod Viswanath.
\newblock The optimal mechanism in differential privacy.
\newblock In {\em 2014 IEEE international symposium on information theory},
  pages 2371--2375. IEEE, 2014.

\bibitem{ghosh2009universally}
Arpita Ghosh, Tim Roughgarden, and Mukund Sundararajan.
\newblock Universally utility-maximizing privacy mechanisms.
\newblock In {\em Proceedings of the forty-first annual ACM symposium on Theory
  of computing}, pages 351--360, 2009.

\bibitem{gupte2010universally}
Mangesh Gupte and Mukund Sundararajan.
\newblock Universally optimal privacy mechanisms for minimax agents.
\newblock In {\em Proceedings of the twenty-ninth ACM SIGMOD-SIGACT-SIGART
  symposium on Principles of database systems}, pages 135--146, 2010.

\bibitem{hardttalwar}
Moritz Hardt and Kunal Talwar.
\newblock On the geometry of differential privacy.
\newblock In {\em Proceedings of the 42nd ACM symposium on Theory of computing,
  STOC '10}, pages 705--714, New York, NY, USA, 2010.

\bibitem{imola2022balancing}
Jacob Imola, Shiva Kasiviswanathan, Stephen White, Abhinav Aggarwal, and
  Nathanael Teissier.
\newblock Balancing utility and scalability in metric differential privacy.
\newblock In {\em Uncertainty in Artificial Intelligence}, pages 885--894.
  PMLR, 2022.

\bibitem{nikolov2023private}
Aleksandar Nikolov.
\newblock Private query release via the johnson-lindenstrauss transform.
\newblock In {\em Proceedings of the 2023 Annual ACM-SIAM Symposium on Discrete
  Algorithms (SODA)}, pages 4982--5002. SIAM, 2023.

\bibitem{nikolov2013geometry}
Aleksandar Nikolov, Kunal Talwar, and Li~Zhang.
\newblock The geometry of differential privacy: the sparse and approximate
  cases.
\newblock In {\em Proceedings of the forty-fifth annual ACM symposium on Theory
  of computing}, pages 351--360, 2013.

\bibitem{ullman2011pcps}
Jonathan Ullman and Salil Vadhan.
\newblock {PCPs} and the hardness of generating private synthetic data.
\newblock In {\em Theory of Cryptography Conference}, pages 400--416. Springer,
  2011.

\bibitem{vershyninbook}
Roman Vershynin.
\newblock {\em High-dimensional probability. An introduction with applications
  in data science}.
\newblock Cambridge University Press, 2018.

\bibitem{wainwright2019high}
Martin~J Wainwright.
\newblock {\em High-dimensional statistics: A non-asymptotic viewpoint},
  volume~48.
\newblock Cambridge university press, 2019.

\end{thebibliography}

\end{document}